\newtheorem{theorem}{Theorem}
\newtheorem{lemma}{Lemma}
\newtheorem{example}{Example}
\newtheorem{definition}{Definition}
\newcommand{\citet}[1]
{\citeauthor{#1}~\shortcite{#1}}
\newcommand{\citep}{\cite}
\newcommand{\xhdr}[1]{\vspace{2mm} \noindent{\bf #1}}
\newtheorem{question}{Question}
\newcommand{\bluee}[1]{{\color{blue} #1}}
\newcommand{\redd}[1]{{\color{red} #1}}
\newcommand{\COMM}[1]{}
\newcommand{\nr}[1]{E(#1)}  % the set of neighboring edges incident 
\newcommand{\cra}{\ensuremath{\operatorname{CR-ALG}}\xspace}
\newcommand{\lpa}{\ensuremath{\operatorname{MMP-ALG}}\xspace}
\newcommand{\MP}{\ensuremath{\operatorname{MMP}}\xspace}
\newcommand{\osbm}{\ensuremath{\operatorname{OSBM}}\xspace}
\newcommand{\obm}{\ensuremath{\operatorname{OBM}}\xspace}
\newcommand{\LP}{\ensuremath{\operatorname{LP	}}\xspace}
\newcommand{\OPT}{\ensuremath{\operatorname{OPT}}\xspace}
\newcommand{\ALG}{\ensuremath{\operatorname{ALG}}\xspace}
\newcommand{\NEGCR}{\ensuremath{\operatorname{NEG-CR}}\xspace}
\newcommand{\ie}{\emph{i.e.,}\xspace}
\newcommand{\eg}{\emph{e.g.,}\xspace}
\newcommand{\bI}{\mathbf{I}}
\newcommand{\cX}{\mathcal{X}}
\newcommand{\sE}{\mathbf{1}}   %attention please!
\newcommand{\cM}{\mathcal{M}}
\newcommand{\E}{\mathbb{E}}
\newcommand{\cP}{\mathcal{P}}
\newcommand{\cI}{\mathcal{I}}
\newcommand{\cR}{\mathcal{R}}
\newcommand{\bz}{ \mathbf{0}}
\newcommand{\supp}{Supp}
\newcommand{\x}{\mathbf{x}}
\newcommand{\y}{\mathbf{y}}
\newcommand{\ba}{\mathbf{a}}
\newcommand{\bb}{\mathbf{b}}
\newcommand{\X}{\mathbf{X}}
\newcommand{\Y}{\mathbf{Y}}
\newcommand{\Z}{\mathbf{Z}}
\newcommand{\ep}{\epsilon}
\newcommand{\alp}{\alpha}
\begin{document}
% The file aaai.sty is the style file for AAAI Press 
% proceedings, working notes, and technical reports.
%

\title{Balancing Relevance and Diversity in Online Bipartite Matching via Submodularity}

\author{John P. Dickerson, Karthik Abinav Sankararaman, Aravind Srinivasan, Pan Xu\\
	\{john, kabinav, srin, panxu\}@cs.umd.edu \\
	University of Maryland, College Park, MD, USA\\ \\
\emph{``Different roads sometimes lead to the same castle.'' -- George R.R. Martin}}

\maketitle

\vspace{2mm}
\begin{abstract}
\vspace{-2mm}
In bipartite matching problems, vertices on one side of a bipartite graph are paired with those on the other.  In its online variant, one side of the graph is available offline, while the vertices on the other side arrive online. When a vertex arrives, an irrevocable and immediate decision should be made by the algorithm; either match it to an available vertex or drop it.  Examples of such problems include matching workers to firms, advertisers to keywords, organs to patients, and so on.  Much of the literature focuses on maximizing the total relevance---modeled via total weight---of the matching. However, in many real-world problems, it is also important to consider contributions of diversity: hiring a diverse pool of candidates, displaying a relevant but diverse set of ads, and so on.  In this paper, we propose the Online Submodular Bipartite Matching (\osbm) problem, where the goal is to maximize a submodular function $f$ over the set of matched edges. This objective is general enough to capture the notion of both diversity (\emph{e.g.,} a weighted coverage function) and relevance (\emph{e.g.,} the traditional linear function)---as well as many other natural objective functions occurring in practice (\emph{e.g.,} limited total budget in advertising settings). We propose novel algorithms that have provable guarantees and are essentially optimal when restricted to various special cases. We also run experiments on real-world and synthetic datasets to validate our algorithms.
\end{abstract}

%%%%% Old Abstract %%%%%
% Online bipartite matching problems have been a source of significant interest due to the ubiquity of matching markets in practice, \emph{e.g.,} matching jobs to candidates, advertisers to keywords, organs to donors, to name a few. In many of these markets there are two competing objectives, namely, relevance and diversity. Much of the literature has studied problems where the goal is to maximize the total relevance of the matching (modeled as the total weight of the matching). However, in several such markets an equally important objective is also to maximize an appropriate notion of diversity (\emph{e.g.,} hiring a diverse pool of candidates for jobs). In this paper, we propose a model called the Online Submodular Bipartite Matching (\osbm) where, similar to the online bipartite matching problem, one set of vertices are available offline while the vertices from the other set come online and the algorithm has to make an irrevocable and immediate match. However, in this problem the goal is to maximize a submodular function $f$ over the set of matched edges. This objective is general enough to capture the notion of both diversity (\emph{e.g.,} weighted coverage function) and relevance (\emph{e.g.} linear function) as well as many other natural objective functions occurring in practice (\emph{e.g.,} limited total budget in advertising setting). We propose novel algorithms that have provable guarantees and are essentially optimal when restricted to various special cases. We also run numerical experiments to validate our algorithms.

\section{Introduction}\label{sec:intro}

%\bluee{the first paragraph should be removed and add one or sentence to the second one}. 
Online Bipartite Matching (\obm) problems are primarily motivated by Internet advertising. In the basic version of this problem, we are given a bipartite graph $G = (U, V, E)$, where $U$ and $V$ represent the offline vertices (advertisers) and online vertices (keywords or impressions) respectively. An edge $e=(u,v)$ represents a bid by advertiser $u$ for a keyword $v$. When a keyword $v$ arrives, a central agency must make an \emph{instant and irrevocable} decision to either reject $v$ or assign $v$ to one of its ``neighbors'' (\ie a vertex that is connected to $v$ by an edge in $G$) $u$ and obtain a profit $w_e$ for the match $e=(u,v)$. A matched advertiser $u$ is no longer available for future matches. The goal is to design an efficient online algorithm that maximizes the expected total weight (profit) of the matching. Following the seminal work of~\citet{kvv}, there has been a large body of research on related variants~%\citep[see, \eg,][]{mehta2012online}.  % <-- use citep to force [Author, Name] ....?
\cite{mehta2012online}.
During the last decade, \obm and its variants have seen wider applications in various matching markets: crowdsourcing Internet marketplaces~(\eg \cite{assadi2015online,ho2012online}), online spatial crowdsourcing platforms~(\eg \cite{TongVLDB17,TongICDE16,tong2016vldb}), ride-sharing platforms~(\eg \cite{DickersonAAAI18}). In each of these applications we have the following features (1) agents from at least one side appear online (\ie one-by-one) (2) an online agent on arrival has to either be matched immediately to an offline agent---or be rejected.

 % Online bipartite matching was first introduced by \cite{kvv} and attracted more attention later due to the Internet advertising business (see the survey \cite{mehta2012online}).
 
 %\bluee{add more application papers here into other areas, related to AI community}. 
% share two distinctive features which are uniquely associated with online matching models
%typically captured by a linear function.

Most prior research on online matching focuses on maximizing the total weight of the final matching~\cite{mehta2012online}, which captures the quality/relevance of all the matches. In many matching markets, we also care about the \emph{diversity} of the final matching along with relevance. \citet{Faez-17} considered a motivating example of matching academic papers to potential reviewers: \emph{just} maximizing the relevance (the quality of each match) could potentially assign a paper to multiple scholars in a single lab due to shared expertise, which is undesirable. Instead, we want to assign each paper to \emph{relevant} experts with diverse backgrounds to obtain comprehensive feedback. Maximizing diversity\footnote{Both individual and aggregate diversity~\cite{ad12}.} is of particular importance in various recommendation systems, ranging from recommendations of new books and movies on eBay~\cite{Chen-TKDE} to returning search-engine queries~\cite{agrawal2009diversifying}. A common strategy to address diversity is to first formulate a specific objective (typically maximization over a submodular function\footnote{See Section~\ref{sec:prelim} for a formal definition and canonical examples.}) capturing the balance of diversity and relevance and then design an efficient algorithm---typically a greedy one---to solve it (\eg \cite{Faez-17} and references within).

%In all these scenarios, users hope to receive as much broad and diverse recommendations as possible. The most common technique to address diversity is to first formulate a specific objective (typically maximization over a submodular) capturing the balance of diversity and relevance and then design an efficient algorithm (typically greedy) to solve it. \cite{Faez-17} proposed a diverse weighted bipartite $b$-matching problem with minimization over a supmodular. \cite{Sha-16} formulated the tradeoff as a combinatorial optimization with maximization over a well-designed 
%submodular function. \cite{Qin-13} used the entropy regularizer to tackle the diversity and showed that the objective satisfied non-decreasing monotonicity and submodularity. Among all approaches, maximization over a coverage  function is the most common strategy, see, \eg \cite{Ziegler-05, Ge-10,Puthi-16}. 

Inspired by the broad applications of \obm and submodular maximization, we propose a variant of the online matching model which we call Online Submodular Bipartite Matching (\osbm). In particular, we answer the main Question~\ref{que:QMain}, defined formally below.

	\xhdr{Main model.} Suppose we have a bipartite graph $G=(U,V, E)$ where $U$ and $V$ represent the offline and online agents respectively. We have a finite time horizon $T$ (known beforehand) and for each time (or round) $t \in [T] \doteq \{1, 2, \ldots, T\}$, at most one vertex $v$ is sampled---in which case we say $v$ \emph{arrives}---from a given \emph{known} probability distribution $\{p_{v}\}$. That is, $\sum_{v \in V} p_{v} \le 1$; thus, with probability $1-\sum_{v \in V} p_{v}$, none of the vertices from $V$ will arrive at $t$. The sampling process is independent across different times. Let $r_v\doteq T\cdot p_v$ denote the expected number of
arrivals of $v$ in the $T$ online rounds (we interchangeably refer to this as the arrival rate of $v$). We assume that the value $r_v$ lies in $[0,1]$. 
%(we can create multiple copies of $v$ if $r_v>1$ and make the final arrival rate fall in $[0,1]$ for each copy). 
Once a vertex $v$ arrives, we need to make an \emph{immediate} and  \emph{irrevocable} decision: either to reject $v$ or assign $v$ to one of its neighbors in $U$. Each $u$ has a unit capacity: it will be unavailable in the future upon being matched.\footnote{The general case where each $u$ has a given capacity $C_u$ can be reduced to this by creating $C_u$ copies of $u$.} We are given a \emph{non-negative monotone submodular} function $f$ over $E$ as an input. Our goal is to design an online matching algorithm such that $\E[f(\cM)]$ is maximized, where $\cM$ is the final (random) matching obtained. 

	There are two sources contributing to the randomness of $\cM$: the stochasticity from the online arrivals of $V$ and the internal randomness used by the algorithm. Note that $\cM$ can be a semi-matching, where each $u$ has degree at most $1$ while some $v$ may have degree more than $1$ (due to multiple online arrivals of $v$). Following prior work~\cite{mehta2012online}, we assume $|V| \gg |U|$ and $T\gg 1$. Throughout this paper, we use edge $e=(u,v)$ and assignment of $v$ to $u$ interchangeably. 

	\begin{question}
		\label{que:QMain}
			Is there a constant-factor competitive ratio\footnote{See Definition \ref{defn:compRatio}. Constant refers to value being reasonably bounded away from zero even for large graphs.} for  online algorithm for the Online Submodular Bipartite Matching problem?  
	\end{question}
	\vspace{-2mm}
	\xhdr{Related model.} One important direction in addressing diversity in online algorithms has been via online convex programming. In particular,~\citet{agrawal2014fast} considered the model of maximizing a concave function under convex constraints. At each time-step a random vector is drawn from an unknown distribution and the goal is to satisfy a convex constraint in expectation. Our work differs from theirs in multiple aspects. First, the offline problem of \citet{agrawal2014fast} is poly-time solvable, while our problem even in the offline version has unknown hardness (status unknown for both NP- and APX-hardness). Equivalence between discrete and continuous functions exists for submodular minimization via the Lov\'{a}sz extension. However, a similar continuous relaxation for submodular maximization is NP-hard to evaluate.\footnote{\eg Slide 26 in https://goo.gl/HAhqaZ}  Hence it is unclear how one would use their model to address our problem.	 Secondly, they assume small budgets while all matching type problems differ from allocation problems in that this assumption is not true (in fact, the main challenge is small budgets). The other difference is that our ``known i.i.d.'' gives algorithm design more power as compared to unknown distributions and therefore helps obtain improved ratios rigorously. For example, the online matching problem with linear objectives has been studied both in unknown distribution and known i.i.d. models separately since it presents a natural trade-off---knowing more information about the distribution and the competitive ratio. Based on applications, one would make assumption one-way or the other.

\emph{Special cases.} Our model generalizes some well-known problems in this literature. Note that if the submodular function is just a linear function of the weights this reduces to online weighted matching. Our model can also capture the Submodular Welfare Maximization (SWM) problem~\cite{kapralov2013online}; this problem and its variants have been widely studied in machine learning and economics (\eg see \cite{nips16Hossein} and references within). Given an instance of SWM we can add polynomially many extra vertices and reduce it to an instance of our problem. Due to space constraints we defer the proof of this reduction and all other theorems/lemmas in this paper to the supplementary materials.

\setlength{\belowcaptionskip}{-10pt}
\begin{figure*}[h!]
    \centering
     \includegraphics[width=\textwidth]{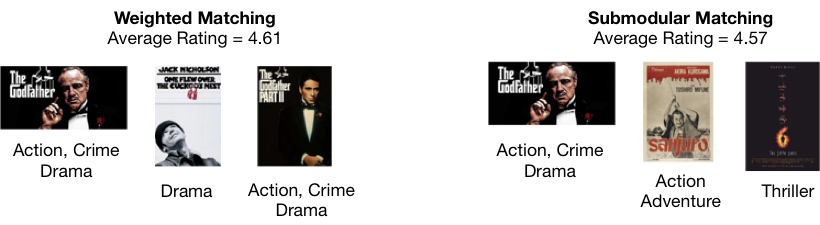}
      \caption{Recommended Movies for User 574. Left - weighted matching (top 3 highest predictions). Right - submodular matching with coverage function (balancing diversity of genres).}
       \label{fig:movies}
\end{figure*}

\xhdr{Applications.} We briefly describe some motivating examples for our problem. The first important application is in \emph{recommender systems}. Consider the problem of recommendation in platforms like Netflix, Amazon, etc. We have a set of users who come online and the system needs to choose a subset of movies, items to recommend to the user which has the most relevance. At the same time, the recommendations to any user needs to be diverse both from an engagement perspective (\eg, a user doesn't want to see only action movies recommended or just a single brand of items while shopping) and from a fairness perspective (\eg not showing only stereotypical recommendations based on race, gender, etc.). See Fig.~\ref{fig:movies} for an example on the MovieLens dataset. This naturally fits our model, where we can capture this trade-off between relevance and diversity via a weighted coverage function (which is monotone and submodular). Another application is in \emph{online advertising and auction design}. Retargeting in personalized advertisements is a major innovation in the past decade where potential advertisers collect background information to provide a better ad experience to their users. One of the major optimization problems for the advertiser is to create various ads to be shown based on the user profile without knowing a-priori the demand for various versions. Hence the advertiser instead specifies a single budget $B$ for a bundle of ads. The goal of the ad-matching agency is to run a matching algorithm with the objective that the revenue they will get for this bundle is $\min \{B, \sum_{e \in \cM} w_e$\}, which is a submodular function. Another application is in \emph{matching candidates to jobs in a dynamic job market}. A hiring agency announces openings for various positions and hires the best candidates as and when they come.  The most basic version of this problem is called the \emph{secretary problem}~\cite{vanderbei1980optimal}. The goal is usually to hire the best candidates for the open positions. Recently, with increasing awareness of various systemic biases, companies also look to hire diverse candidates (based on various metrics of diversity such as gender, race, and political leanings), which the classical secretary problem and its variants do not consider.\footnote{These have been explored practically in some recruiting systems~\cite{hong2013ihr}.} However, using our model with a submodular function such as a weighted coverage function over the various metrics as an objective, we can essentially capture hiring the best candidates while maximizing diversity.

\xhdr{Arrival assumption}.
The literature on Online Matching considers three broad classes of arrival assumptions: Adversarial Order (AO) (\eg \cite{assadi2015online}), Random Arrival Order (RAO) (\eg \cite{zhao2014crowdsource,subramanian2015online}), and Known Independent and Identical Distribution (KIID). In KIID, an online agent's arrival is modeled as a sample (identically with replacement) from a known distribution (\eg \cite{singer2013pricing,singla2013truthful}). In this paper, we consider the KIID assumption which captures the fact that the distribution over types can be learnt from historical data; thus can get improved ratios over other models (see \cite{DickersonAAAI18} for a discussion). 

\xhdr{Our contributions.}
	Our contributions can be summarized as follows. First, we propose the Online Submodular Bipartite Matching (\osbm) model, which abstractly captures the balance between \emph{relevance} and \emph{diversity} in the context of matching markets. Next, we provide two \emph{provably} good algorithms for this model \footnote{One of them works under a mild assumption of $|U| = o(\sqrt{T})$}. The first algorithm is based on Contention Resolution schemes used in the offline submodular maximization literature. This algorithm works for the case when the arrival rates are integral. Our second algorithm is based on using a feasible solution to an appropriate mathematical program, where this feasible solution approximates the offline optimal by a factor $1-1/e$, to guide the online actions. This algorithm works for the general case of \emph{arbitrary} arrival rates. The ratio achieved by this algorithm is \emph{tight} even when restricted to the special case of linear objectives, however the proof only works when the number of rounds $T \rightarrow \infty$. Nonetheless it can be seen as a natural generalization to submodular functions of the LP-based algorithm proposed by~\citet{haeupler2011online}. Finally, we run 
		experiments on both real-world as well as synthetic datasets on some common submodular functions to validate our algorithms and compare them to natural heuristics. 
	
%In this paper we propose a general framework on that: maximize a general monotone submodular (MS) in an online setting.
% Details the two features: matching: matching polytope, each offline vertices has capacity one WLOG; then KAD. 
%For the maximization of a monotone submodular (MS) subject to a simple cardinality constraint, \cite{nem-78} show
\xhdr{Related work.}
%the Maximization of a Monotone Submodular Function 
%in both Operations Research and Theoretical Computer Science (TCS) communities.
The offline version of our problem is the well-studied ``maximizing a monotone submodular function subject to a bipartite matching polytope constraint'' problem. More generally, the constraint set can be viewed as an intersection of two partition matroids. The general area of submodular maximization is well studied; here, we only survey algorithmic advances related to maximization of a monotone submodular function subject to various constraints. The classical work of \citet{nem-78} showed that the natural greedy algorithm achieves an $(1-1/e)$-approximation under a cardinality constraint, which is optimal in the value oracle model assuming $P\neq NP$~\cite{nembest-78}.  Under a general matroid constraint, \citet{cal-11} gave an algorithm achieving the optimal ratio of $1-1/e$ (in the value oracle model defined in Section~\ref{sec:prelim}) using the \emph{pipage rounding} technique. \citet{lee10} considered the constraint case of $k$ matroids with $k \ge 2$ and presented a \emph{local-search} based algorithm.\COMM{which achieves a ratio of $1/(k+\epsilon)$ for any $\epsilon>0$.} \citet{Kanthi-17} studied the case of intersection of $k$ matroids and a single knapsack constraint. \COMM{and gave a $\frac{1-e^{-(k+1)}}{k+1}$-approximation algorithm for any $k \ge 1$.}Recently a series of works has considered submodular maximization in the online setting. In particular, \citet{buchbinder-15} and \citet{chan2017} studied online submodular maximization in the adversarial arrival order with preemption: on arrival of an item, we should decide whether to accept it or not and \emph{possibly rejecting a previously accepted item}. In this paper, we do not allow preemption but consider a more flexible arrival assumption (\emph{i.e.,} KIID). This makes the problem tractable and admits algorithms with non-trivial competitive ratios. Apart from the offline and online models, submodular maximization has received much attention in other models due to its applications in summarization~\cite{tschiatschek2014learning}, data subset selection and active learning~\cite{wei2015submodularity}, and diverse summarization~\cite{mirzasoleiman2016fast}, to name a few. It has been studied in the streaming \cite{badanidiyuru2014streaming,mirzasoleiman2017streaming}, distributed \cite{mirzasoleiman2016fast,mirzasoleiman2016distributed} and stochastic~\cite{karimi2017stochastic,stan2017probabilistic} settings. Online Bipartite matching has been studied with a long line of work, overviewed comprehensively by~\citet{mehta2012online}. In the KIID arrival model, \citet{feldman2009online} introduced the idea of two suggested matchings and used that to guide the online phase, which was the first to beat $1-1/e$ for the \emph{unweighted} online bipartite matching. A similar idea was used by~\citet{haeupler2011online} for \emph{edge-weighted} case. \citet{manshadi2012online},~\citet{jaillet2013online}, and~\citet{brubach2016new} designed \emph{LP-based} online algorithms for the unweighted, vertex-weighted and edge-weighted versions of online matching problems to achieve the best known ratios. 

 \section{Preliminaries}
 \label{sec:prelim}

 We first describe the notation used throughout this paper. For two vectors $\ba$ and $\bb$,  $\ba \le \bb$ denotes the coordinate-wise ``$\leq$'' operation.  For a binary vector $\ba =(a_e) \in \{0,1\}^m$, let $\supp(\ba) =\{e: a_e=1\}$ be the support of $\ba$; we write $f(\ba)$ as a short-hand for $f(\supp(\ba))$ for any set function over $E$. In this paper, we use vectors to denote sets (\emph{i.e.,} using the indicator binary vector for a set). $e$ is used both as an edge index as well as Euler's constant; usage will be apparent from the context. We now give a formal definition of the submodular function and describe some canonical examples.

	\begin{definition}[Submodular function]
	A function $f: 2^{[n]} \rightarrow \mathbb{R}^{+}$ on a ground-set of elements $[n] := \{1, 2, \ldots, n\}$ is called submodular if for every $A, B \subseteq [n]$, we have that $f(A \cup B) + f(A\cap B) \leq f(A) + f(B)$ and $f(\phi) = 0$. Additionally, $f$ is said to be monotone if for every $A \subseteq B \subseteq [n]$, we have that $f(A) \leq f(B)$. 
\end{definition}

For our algorithms, we assume a \emph{value-oracle} access to a submodular function. This means that, there is an oracle which on querying a subset $T \subseteq [n]$, returns the value $f(T)$. The algorithm does not have access to $f$ \emph{explicitly}.

	\xhdr{Examples.} Some common examples of submodular functions include the coverage function, piece-wise linear functions, budget-additive functions among others. In our experiments section, we use the following two examples.
		\begin{enumerate}
			\item \emph{Coverage function.} Given a universe $\mathcal{U}$ and $g$ subsets $A_1, A_2, \ldots, A_g \subseteq \mathcal{U}$, the function $f(S) = |\cup_{i \in S} A_i|$ is called the coverage function for any $S \subseteq [g]$. This can naturally be extended to the weighted case. Given a non-negative weight function $w: \mathcal{U} \rightarrow \mathbb{R}^{+}$, then the weighted coverage function is defined as $f(S) = w(\cup_{i \in S} A_i)$.

			\item \emph{Budget-additive function.} For a given total budget $B$ and a set of weights $w_i \geq 0$ on the elements $[g]$ of universe $\mathcal{U}$, for any subset $S \subseteq \mathcal{U}$ the budget-additive function is defined as $f(S)=\min\{ \sum_{i \in S} w_i, B\}$.

		\end{enumerate}

	\begin{definition}[Multilinear extension]
		The multilinear extension of a submodular function $f$ is the continuous function $F: [0, 1]^n \rightarrow \mathbb{R}^+$ defined as 
		 $F(x) := \sum_{T \subseteq [n]} (\prod_{k \in T} x_k \prod_{k \not \in T}(1-x_k)) f(T)$.
	\end{definition}

Note that $F(\x) = f(\x)$ for every $\x \in \{0, 1\}^n$. The multilinear extension is a useful tool in maximization of submodular objectives. In particular, the above has the following probabilistic interpretation. Let $\mathcal{R}_{\x} \subseteq [n]$ be a random subset of items where each item $i \in [n]$ is added into $\mathcal{R}_{\x}$ independently with probability $x_i$. We then have $F(\x) = \mathbb{E}[f(\mathcal{R}_{\x})]$.
	
	\xhdr{Offline optimal.} Throughout this paper we use the terms offline optimal (or interchangeably offline optimal value) which refers to the following. Given a specific sequence $S$ of (random) arrivals, the ``offline problem'' is to find the best hindsight matching that maximizes the objective on this sequence, denoted by $\OPT(S)$. Note that $\OPT(S)$ is a random variable since $S$ is random. The offline optimal value, denoted by $\mathbb{E}[\OPT]$, is the expectation of $\OPT(S)$, where the expectation is taken over all possible random sequences $S$.
	
		\begin{definition}[Competitive ratio]
			\label{defn:compRatio}
		Let $\mathbb{E}[\ALG(\mathcal{I}, \mathcal{D})]$ denote the expected value obtained by an algorithm $\ALG$ on an instance $\mathcal{I}$ and arrival distribution $\mathcal{D}$. Let $\mathbb{E}[\OPT(\mathcal{I})]$ denote the expected \emph{offline optimal}. Then the competitive ratio is defined as $\min_{\mathcal{I}, \mathcal{D}} \mathbb{E}[\ALG(\mathcal{I}, \mathcal{D})]/\mathbb{E}[\OPT(\mathcal{I})]$.
	\end{definition}

 \section{Challenges and Main Techniques}

Our algorithm, like prior work on Online Matching, follows a two-phase approach divided into an Offline phase and an Online phase.
% Our model \osbm can be viewed as a significant generalization of edge-weighted \obm since linear objective is a very special submodular function. For edge-weighted \obm, a common strategy to claim that $\ALG$ achieves an online ratio of $\alpha$ is by edge-by-edge analysis: first show each edge is added into the final matching with marginal distribution at least $\alpha x^*_e$ where $x^*_e$ refers to the probability that $e$ is added in an offline optimal \OPT; then by \emph{linearity of expectation}, we can conclude that $\E[\ALG] \ge \alp \OPT$. However, this fails to work in our case when objective is maximization over a general monotone submodular. 
 
%We overcome the challenge in two ways.  
 \xhdr{Offline phase.}
The first key challenge is to obtain a good handle on the optimal offline solution. For the edge-weighted \obm, the offline version reduces to a maximum weighted matching problem on a bipartite graph, which can be solved efficiently. For \osbm, the offline version which is to maximize a general non-negative monotone submodular function within a bipartite polytope, is non-trivial. Neither polynomial time- nor APX-hardness of this problem is well-understood. We tackle this challenge by first proposing a (offline) Multilinear Maximization Program (\MP) where we maximize the multilinear extension $F$ of the given submodular function $f$ subject to bipartite matching constraints, and then use the continuous greedy algorithm~\cite{cal-11} to solve it. This gives us a marginal distribution $\x^*=\{x_e^*\}$ for each edge being added in the offline optimal satisfying $F(\x^*) \ge (1-1/e)\E[\OPT]$.

%into the final matching by
%The second challenge is how to claim the final online ratio based on the approximate marginal distribution $\x^*$. 

\xhdr{Online phase}. The next challenge is to use the approximate offline marginal distribution $\x^*$ to guide the online phase. We propose two online algorithms which take $\x^*$ as input and make the online decisions by using a modified version of this. The first is inspired by Theorem 4.3 due to~\citet{bansal2012solving} and its extension. We call this the \emph{CR-based} algorithm which works for the special case of \emph{integral arrival rates}. If $f$ is a linear function, then showing each edge $e$ is added by an online algorithm \ALG with probability at least $\alp x^*_e$ implies that $\ALG$ achieves a final ratio of $\alpha (1-1/e)$ (the second factor $1-1/e$ accounts for the loss in the offline phase). This is because $\E[\ALG] \ge \alp F(\x^*)$ by \emph{linearity of expectation}. However, this approach fails when $F$ is a multilinear extension of the submodular function $f$. We overcome this by using Theorem 4.3 of~\citet{bansal2012solving} (see supplementary materials for the theorem statement), which gives a sufficient condition to ensure that $\E[\ALG] \ge \alp F(\x^*)$ for any multilinear extension $F$, provided that each edge $e$ is added in \ALG with a marginal distribution at least $\alp x_e^*$. This framework is generalized to Contention Resolution (CR) schemes~\cite{vondrak2011submodular}, which is used as a tool for the following general problem. Consider a fractional $\x  \in \cP_{\cI}$ where $\cP_{\cI}$ is a convex relaxation of an integral polytope $\cI$  and let $\X=(X_e)$ (not necessarily within $\cI$) be a random indicator vector where every $X_e$ is a Bernoulli random variable with mean $x_e$. Our goal is to round $\X$ to another integral vector $\Y$ such that (1) $\Y \in \cI$ and (2) $\E[f(\Y)] \ge \alp \E[f(\X)]=\alp F(\x)$ with as large $\alp$ as possible.

 Our second proposed algorithm, which works for arbitrary arrival rates, is an \emph{\MP-based} algorithm (\lpa) described as follows. When a vertex $v$ arrives, sample a neighboring edge $e=(u,v)$ with probability $x_e^*$ and include it iff $u$ is still available. When $f$ is linear,~\citet{haeupler2011online} gave a simple and tight analysis\footnote{Here, ``tight'' refers to the analysis and not the problem formulation itself.} showing that $\lpa$ loses a factor of $1-1/e$ in the online phase. The tight example is as follows.
 
 \begin{example}
Consider an unweighted bipartite graph $G=(U,V,E)$ consisting of a perfect matching with $|U|=|V|=T$ with $x_e^*=1$ for each $e$ and $f(\x^*)=\sum_e x^*_e$. Notice that each $e=(u,v)$ is added by \lpa iff $v$ comes at least once during the $T$ rounds, which occurs with probability equal to $1-1/e$. Thus in this example, we have that $\E[\lpa] =(1-1/e) f(\x^*)$. 
%Thus we see that each edge will be added with probability $1-1/e$ if 
 \end{example} 

 In this paper, we give a tight analysis showing that \lpa losses a factor at most $1-1/e$ in the online phase even for an arbitrary non-negative monotone submodular function $f$.  The downside is that the bounds hold only in the limit when $T \rightarrow \infty$. We believe a careful modification of our proof will lead to a finite time analysis, but do not do so in this paper. In particular when $T \rightarrow \infty$, we prove that $\E[\lpa] \ge (1-1/e)F(\x^*)$ and thus this yields a final ratio of $(1-1/e)^2$ (after incorporating another factor of $1-1/e$ in the offline phase). The main proof idea is through a virtual algorithm \ALG which has the same performance as \lpa and by applying pipage rounding \cite{ageev2004pipage1} to \ALG we show that $\E[\ALG] \ge F((1-1/e)\x^*) \ge (1-1/e) F(\x^*)$.

\section{Offline Phase} 
\label{sec:offline-phase}

For an edge $e$, let $x_{e}$ be the probability that $e$ is chosen in any fixed offline optimal algorithm. For each $u$ (likewise for $v$), let $\nr{u}$ ($\nr{v}$) be the set of neighboring edges incident to $u$ ($v$) in $G$. Let $F: [0,1]^m \rightarrow \mathbb{R}_{+}$ be the multilinear extension of $f$. Consider the following mathematical program.

\vspace{-5mm}
\begin{alignat}{2}
\label{LP:offline}
\text{maximize}    & ~~\textstyle F(\x)  & \\
\text{subject to}  & \textstyle \sum_{e \in \nr{v} } x_{e} \le  r_v    & \textstyle \qquad \forall v \in V\label{cons:v}\\
                & \textstyle \sum_{e \in \nr{u} } x_{e} \le  1    &\textstyle  \qquad \forall u \in U\label{cons:u}\\
                   & \textstyle 0 \le x_{e} \le 1   & \textstyle \forall e \in E \label{cons:e}
\end{alignat}

The constraints can be interpreted informally as follows. Constraint \eqref{cons:v} states that the expected number of matches for any $v$ is no more than the expected number of arrivals of $v$ (\emph{i.e.,} $r_v$). Constraint \eqref{cons:u} states that the expected number of matches for every $u$ is no more than $1$, since $u$ has an unit capacity. Constraint \eqref{cons:e} is valid since every $x_e$ is a probability value.

\vspace{0.1in}
\begin{lemma}\label{lem:lp}
There is an efficient algorithm (running in polynomial time) which returns a feasible solution $\x^*$ to the program \eqref{LP:offline} such that $F(\x^*) \ge (1-1/e) \E[\OPT]$, where $\E[\OPT]$ is the offline optimal value.
 \end{lemma}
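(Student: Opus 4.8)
I would treat the program \eqref{LP:offline} as a relaxation of the offline hindsight problem and charge it against a fixed offline-optimal algorithm via its ``marginal usage'' vector. Fix such an algorithm; for a random arrival sequence $S$ together with the algorithm's internal coins let $\cM_S$ be the (semi-)matching it returns, so that $\E[f(\cM_S)] = \E[\OPT]$, and set $x^{o}_e := \Pr[e\in\cM_S]$ for every edge $e$. The first step is to check that $\x^{o}=(x^{o}_e)$ is feasible for \eqref{LP:offline}. In each realization $\cM_S$ the degree of an offline vertex $u$ is at most $1$, so $\sum_{e\in\nr{u}}x^{o}_e = \E[\deg_{\cM_S}(u)]\le 1$, which is \eqref{cons:u}; the degree of an online vertex $v$ in $\cM_S$ is at most the number of arrivals of $v$ in $S$ (distinct arrivals go to distinct, unit-capacity neighbours), an integer with expectation $r_v$, so $\sum_{e\in\nr{v}}x^{o}_e\le r_v$, which is \eqref{cons:v}; and \eqref{cons:e} holds because each $x^{o}_e$ is a probability. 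This is the only point where the matching structure of the problem is used.

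\textbf{The main step.} It remains to lower bound $F(\x^{o})$ by $\E[\OPT]$ up to a constant, after which the optimum of \eqref{LP:offline} inherits the same bound. For an edge-weighted (linear) $f$ this would be immediate: $F(\x^{o})=\sum_e w_e x^{o}_e = \E[f(\cM_S)]$ by linearity of expectation. For a genuinely submodular $f$ that identity fails: $\cM_S$ is a \emph{correlated} random subset of $E$, whereas $F$ is the multilinear extension, i.e.\ the extension associated with \emph{independent} rounding, and $F(\x^{o})$ can be strictly smaller than $\E[f(\cM_S)]$. The correct substitute is the correlation-gap bound for monotone submodular functions: for every distribution on subsets of $E$ whose coordinate marginals are $\x^{o}$, the expected $f$-value is at most $\tfrac{e}{e-1}F(\x^{o})$. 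Applying this to the distribution of $\cM_S$ yields $F(\x^{o}) \ge (1-1/e)\E[f(\cM_S)] = (1-1/e)\E[\OPT]$, so the optimal value of \eqref{LP:offline} is at least $(1-1/e)\E[\OPT]$.

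\textbf{Making it algorithmic.} The vector $\x^{o}$ cannot be computed directly -- it encodes an optimal solution of the offline problem, whose complexity is open -- so I would instead run the continuous greedy algorithm of~\citet{cal-11} on \eqref{LP:offline}. Its feasible region is a bipartite-matching/transportation-type polytope over which linear functions are optimized in polynomial time, $f$ is accessed only through its value oracle, and continuous greedy outputs in polynomial time a feasible $\x^{*}$ achieving the claimed bound $F(\x^{*}) \ge (1-1/e)\E[\OPT]$. (The precise constant here comes from combining the correlation-gap factor with the standard $1-1/e$ guarantee of continuous greedy, which is the bookkeeping one has to carry out carefully.)

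\textbf{Main obstacle.} The crux is the middle step: quantifying the loss when $F$, the extension tied to independent rounding, is evaluated at the marginals of a strongly correlated, matching-induced distribution. For linear $f$ there is no loss at all; for submodular $f$ the correlation-gap estimate is exactly what supplies the $1-1/e$, and it is the natural generalization of the linearity-of-expectation argument used in the edge-weighted case. A secondary difficulty, peculiar to this problem, is that -- unlike edge-weighted \obm -- the offline problem is not known to be tractable, which is why one must pass to the multilinear relaxation and appeal to continuous greedy rather than solving the offline problem outright.
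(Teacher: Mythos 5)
Your setup is sound and matches the paper's in spirit: defining $x^{o}_e=\Pr[e\in\cM_S]$ as the marginal of a fixed offline optimum, checking that $\x^{o}$ satisfies \eqref{cons:v}--\eqref{cons:e}, and observing that the unknown complexity of the offline problem forces you through continuous greedy are all correct (your feasibility check is in fact more explicit than what the paper writes). The gap is in the quantitative chain of the last two steps. The correlation-gap bound gives $F(\x^{o})\ge(1-1/e)\E[\OPT]$, hence $\max_{\y\in\cP}F(\y)\ge(1-1/e)\E[\OPT]$; but continuous greedy does not solve the program exactly --- it only returns $\x^{*}$ with $F(\x^{*})\ge(1-1/e)\max_{\y\in\cP}F(\y)$ (the version of the guarantee relative to the \emph{integral} optimum is even less useful here, since for $r_v<1$ constraint \eqref{cons:v} forces every integral feasible point to be zero on $\nr{v}$). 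Composing your two factors therefore yields only $F(\x^{*})\ge(1-1/e)^{2}\,\E[\OPT]\approx 0.40\,\E[\OPT]$, not the claimed $(1-1/e)\E[\OPT]$: the ``bookkeeping'' you defer really does multiply the two losses, and the lemma as stated is not established by this route.

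The paper avoids paying twice by invoking a stronger form of the continuous greedy guarantee (Lemma 4 of the cited reference): over a downward-closed polytope, continuous greedy returns a feasible $\x^{*}$ with $F(\x^{*})\ge(1-1/e)\max_{\y\in\cP}f^{+}(\y)$, where $f^{+}$ is the \emph{concave closure} of $f$. Since $f^{+}(\x^{o})$ is by definition a supremum of $\E[f(\cR)]$ over all (arbitrarily correlated) distributions with marginals dominated by $\x^{o}$, and the law of $\cM_S$ is one such distribution, one gets $\max_{\y\in\cP}f^{+}(\y)\ge f^{+}(\x^{o})\ge\E[f(\cM_S)]=\E[\OPT]$ with \emph{no} loss, so only the single $(1-1/e)$ from continuous greedy remains. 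Note that your correlation-gap inequality is exactly the statement $f^{+}(\x)\le\frac{e}{e-1}F(\x)$, so you have the right tool in hand but are inserting its $(1-1/e)$ at the wrong point of the chain. To repair the proof, drop the correlation-gap step and benchmark continuous greedy against the concave closure instead of the multilinear optimum; everything else you wrote stands.
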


\section{Online Algorithms}\label{sec:online}

 In this section, we present several online algorithms that take $\x^*$ as an input, which is a feasible solution to the program \eqref{LP:offline}  with $F(\x^*) \ge (1-1/e)\E[\OPT]$, where $\E[\OPT]$ refers to the offline optimal. 
 
 \xhdr{A CR-based online algorithm.}\label{sec:cra}
 In this section, we present a CR-based online algorithm ($\cra$) for \osbm with integral arrival rates. In this case, we assume that $p_v=1/T$ and $r_v=1$ for all $v$.  
 
The main idea is as follows. We start with $\cR_{\x^*}$, which is obtained by independently sampling each edge $e$ with probability $x^*_e$. Let $\X \in \{0,1\}^m$ be the integral vector corresponding to $\cR_{\x^*}$ such that $X_e=1$ iff $e \in \cR_{\x^*}$. Let $E_{\X}(v) =\{e: e \in \nr{v}, X_e=1\}$ and $E_{\X}(u) =\{e: e \in \nr{u}, X_e=1\}$ be the set of sampled edges incident to $u$ and $v$ respectively. Now we obtain another random vector $\Y \in \{0,1\}^m$ from $\X$ by uniformly sampling an edge from $E_{\X}(u)$ for each $u$ (the sampling process is independent across different $u$). We then use both $\X$ and $\Y$ to guide the online phase. Algorithm~\ref{alg:nadap} describes this algorithm formally.
 
     \setlength{\textfloatsep}{5pt}
  \begin{algorithm}[h!]
\caption{A CR-based algorithm (\cra)} 
\label{alg:cr}
\begin{algorithmic}

\STATE{\textbf{Offline Phase:}}
 \STATE Solve Program  \eqref{LP:offline} using continuous greedy and let $\x^*=(x_e^*)$ be an approximate solution with $F(\x^*) \ge (1-1/e)\E[\OPT]$. 
 \STATE Independently sample each edge with probability $x^*_e$. Let $\X =(X_e)\in \{0,1\}^m$ be the resultant indicator vector such that $X_e=1$ iff $e$ is sampled.
 
 \STATE For each $w \in U\cup V$, let $E_{\X}(w) =\{e: e \in \nr{w}, X_e=1\}$ be the set of sampled edges incident to $w$. Sample one edge uniformly at random from $E_{\X}(u)$  for each $u$ if $E_{\X}(u) \neq \emptyset$. Let $\Y \le \X$ be the indicator vector of the final edges sampled.
 \STATE{\textbf{Online Phase:}}\\
When $v$ arrives at time $t$,
sample an edge $e$ uniformly from $E_{\X}(v)$. Match it if $Y_e=1$ and $e=(u,v)$ is safe at $t$ (\ie $u$ is available); skip it otherwise. 
\end{algorithmic}
\end{algorithm}

\begin{theorem} \label{thm:cra}
There exists an online algorithm $\cra$, which achieves an online competitive ratio of at least $\frac{1}{2}(1-e^{-1/2})(1-1/e)$ for \osbm with integral arrival rates.
\end{theorem}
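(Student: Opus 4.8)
The plan is to recognize the online phase of $\cra$ as a \emph{monotone contention-resolution (CR) scheme} for the partition matroid $\cI = \{\Y\in\{0,1\}^m : \sum_{e\in\nr{u}} Y_e \le 1,\ \forall u\in U\}$ applied to the fractional point $\x^*$, with balancedness constant $c=\tfrac12(1-e^{-1/2})$. Note that the returned matching $\cM$ always lies in $\cI$ (each $u$ is matched at most once) and always satisfies $\cM\subseteq\supp(\X)$ (we only ever add edges with $X_e=1$). Hence, once we establish (i) the marginal survival bound $\Pr[e\in\cM\mid X_e=1]\ge c$ for every edge $e$ and (ii) monotonicity of the scheme, Theorem~4.3 of~\citet{bansal2012solving} (the monotone-CR-scheme lemma, valid for the multilinear extension and not merely linear objectives) yields $\E[f(\cM)]\ge c\,F(\x^*)\ge c(1-1/e)\,\E[\OPT]$ by Lemma~\ref{lem:lp}, which is exactly the claimed ratio. (Throughout we may assume $r_v=1$ for all $v$, dropping vertices with $r_v=0$.) So the whole task reduces to (i) and (ii).

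For (i), fix $e=(u,v)$ and condition on the sampled vector $\X$ with $X_e=1$; set $d_u:=|E_{\X}(u)|\ge 1$ and $d_v:=|E_{\X}(v)|\ge 1$. The key structural observation is that, since $\Y$ selects exactly one edge of $E_{\X}(u)$, on the event $Y_e=1$ \emph{no other edge incident to $u$ can ever be matched}, so $u$ remains available until $e$ itself is matched. Therefore, conditioned on $\X$ and $Y_e=1$, edge $e$ is added precisely when some arrival of $v$ draws $e$ in its uniform sample from $E_{\X}(v)$; since $v$ arrives in each of the $T$ rounds independently with probability $1/T$ and then draws $e$ with probability $1/d_v$, this occurs with probability $1-(1-\tfrac{1}{Td_v})^T \ge 1-e^{-1/d_v}$. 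Combining with $\Pr[Y_e=1\mid\X]=1/d_u$ gives $\Pr[e\in\cM\mid\X,X_e=1]\ \ge\ \tfrac{1}{d_u}\bigl(1-e^{-1/d_v}\bigr)$.

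Now take expectations over $\X$ conditioned on $X_e=1$. Since $e$ is the only edge common to $\nr{u}$ and $\nr{v}$, the variables $d_u$ and $d_v$ are independent given $X_e=1$, so the two factors can be bounded separately. First, $d_u=1+\sum_{e'\in\nr{u}\setminus\{e\}}X_{e'}$ is $1$ plus a sum of independent Bernoulli's of total mean $\le 1$ by constraint~\eqref{cons:u}; writing $\tfrac{1}{d_u}=\int_0^1 t^{d_u-1}\,dt$ and using $\prod_i(1-a_i)\ge 1-\sum_i a_i$ gives $\E[1/d_u\mid X_e=1]\ge\int_0^1 t\,dt=\tfrac12$. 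Second, the map $g(x)=1-e^{-1/x}$ is convex and decreasing on $[1,\infty)$ (since $g''(x)=e^{-1/x}(2x-1)/x^4\ge0$ there), and $\E[d_v\mid X_e=1]=1+\sum_{e'\in\nr{v}\setminus\{e\}}x^*_{e'}\le 1+r_v=2$ by constraint~\eqref{cons:v}; Jensen's inequality followed by monotonicity of $g$ gives $\E[1-e^{-1/d_v}\mid X_e=1]\ge g(2)=1-e^{-1/2}$. Multiplying, $\Pr[e\in\cM\mid X_e=1]\ge\tfrac12(1-e^{-1/2})=c$, which is (i).

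For (ii), monotonicity is immediate from the conditional formula $\tfrac{1}{d_u}(1-e^{-1/d_v})$: enlarging $\supp(\X)$ can only increase $d_u$ and $d_v$, which only decreases this quantity, so $e$ survives with smaller probability. This verifies the hypotheses of the monotone-CR-scheme lemma and finishes the proof. I expect the main obstacle to be the structural observation that a vertex $u$ is never ``blocked'' by a competing edge: this is what decouples the online dynamics and lets the per-edge survival probability factor cleanly as a product over $d_u$ and $d_v$. The remaining expectation estimates are routine one-dimensional inequalities; the only other care needed is to confirm that $\cM\subseteq\supp(\X)$ and $\cM\in\cI$ so that Theorem~4.3 of~\citet{bansal2012solving} applies to the submodular objective rather than merely a linear one.
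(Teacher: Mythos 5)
Your proposal is correct and follows essentially the same route as the paper's proof: both reduce the claim to verifying the Marginal and Monotonicity properties of Theorem~4.3 of \citet{bansal2012solving}, both derive the exact conditional survival probability $\tfrac{1}{k_1}\bigl(1-(1-\tfrac{1}{Tk_2})^T\bigr)$ from the key observation that $Y_e=1$ prevents any competing edge at $u$ from ever being matched, and both finish with $\E[k_1],\E[k_2]\le 2$ plus convexity/Jensen to obtain $\tfrac12(1-e^{-1/2})$. Your explicit note that $d_u$ and $d_v$ are independent given $X_e=1$, and your finite-$T$ inequality $1-(1-\tfrac{1}{Td_v})^T\ge 1-e^{-1/d_v}$, are small tightenings of steps the paper treats implicitly or asymptotically, but the argument is the same.
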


\xhdr{An \MP-based online algorithm.}\label{sec:lpa}
  In this section, we present a \MP-based online algorithm (\lpa) for \osbm. Compared to \cra, it also extends to the regime of arbitrary arrival rates $r_v \in [0, 1]$ for each $v \in V$. Algorithm~\ref{alg:nadap} describes it formally.
   \setlength{\textfloatsep}{5pt}
   \begin{algorithm}[h!]
\caption{An \MP-based online algorithm (\lpa)} 
\label{alg:nadap}
\begin{algorithmic}

\STATE{\textbf{Offline Phase:}}
 \STATE Solve Program  \eqref{LP:offline} using continuous greedy and let $\x^*=(x_e^*)$ be an approximate solution with $F(\x^*) \ge (1-1/e)\E[\OPT]$. 
 
\STATE{\textbf{Online Phase:}}
 \STATE When $v$ arrives at time $t$, sample an edge $e$ from $E(v)$ with probability $\frac{x^*_e}{r_v}$ (at most one such edge gets sampled). Match it if  $e=(u,v)$ is safe at $t$ (\ie $u$ is available) and skip it otherwise. 
\end{algorithmic}
\end{algorithm}

\begin{theorem} \label{thm:lpa}
There exists a \MP-based online algorithm \lpa which achieves a competitive ratio of at least $(1-1/e)^2$ for \osbm when $|U|=o(\sqrt{T})$ and  $T \rightarrow \infty$.
\end{theorem}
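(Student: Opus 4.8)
It suffices to prove $\E[\lpa]\ge(1-1/e)\,F(\x^{*})$, since together with Lemma~\ref{lem:lp} this gives $\E[\lpa]\ge(1-1/e)^{2}\,\E[\OPT]$. The first step is to recast \lpa as an i.i.d.\ \emph{proposal process}. When $v$ arrives at round $t$, \lpa proposes the edge $e=(u,v)$ with probability $p_{v}\cdot(x^{*}_{e}/r_{v})=x^{*}_{e}/T$; hence, independently across the $T$ rounds, each round generates at most one proposal, equal to $e$ with probability $x^{*}_{e}/T$ (and no proposal otherwise, which is consistent because $\sum_{e}x^{*}_{e}\le|U|$ by constraint~\eqref{cons:u} and $|U|=o(\sqrt{T})$). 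Since the first proposal that an offline vertex $u$ receives is always accepted, the output matching is $\cM=\{(u,e_{u}):u\text{ receives at least one proposal}\}$, where $e_{u}$ is the first edge proposed to $u$.

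\emph{Decoupling the offline vertices.} Introduce a virtual process \ALG in which the per-$u$ proposal streams are made independent: for each $u$, independently over $u$ and over the $T$ rounds, $u$ receives edge $e\in\nr{u}$ with probability $x^{*}_{e}/T$. The only rounds on which \ALG and \lpa differ are those in which two distinct offline vertices are both proposed, which happens with probability at most $\sum_{u\ne u'}\big(\sum_{e\in\nr{u}}x^{*}_{e}/T\big)\big(\sum_{e\in\nr{u'}}x^{*}_{e}/T\big)\le|U|^{2}/T^{2}$ per round, hence $O(|U|^{2}/T)=o(1)$ over all rounds under $|U|=o(\sqrt{T})$; coupling the processes to agree off this event gives $\E[\lpa]=\E[\ALG]-o(1)$. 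In \ALG the matching is a \emph{product distribution over $u$}: independently, each $u$ either selects a single edge $e\in\nr{u}$ with probability $q_{u,e}=x^{*}_{e}\cdot\frac{1-(1-\beta_{u})^{T}}{T\beta_{u}}$ (a geometric-series computation over the round at which $u$ is first proposed), where $\beta_{u}=\sum_{e\in\nr{u}}x^{*}_{e}/T$ and $T\beta_{u}=\rho_{u}:=\sum_{e\in\nr{u}}x^{*}_{e}\le1$, or no edge at all. Since $(1-\rho/T)^{T}\le e^{-\rho}$ and $\rho\mapsto(1-e^{-\rho})/\rho$ is decreasing with value $1-1/e$ at $\rho=1$, we obtain $q_{u,e}\ge(1-1/e)\,x^{*}_{e}$ for every edge.

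\emph{From the product matching to $F$.} Let $q=(q_{u,e})$; it lies in the offline-side (partition-matroid) polytope since $\sum_{e\in\nr{u}}q_{u,e}\le1$. Viewing $\cM$ (under \ALG) as a rounding of $q$, we claim $\E[f(\cM)]\ge F(q)$. This is a pipage-rounding-type fact, proved by a hybrid argument: replacing, one $u$ at a time, the fully independent Bernoulli choices at $u$ (with marginals $q_{u,e}$) by the mutually-exclusive choice with the same marginals never decreases $\E[f]$, because for the monotone submodular $g(\cdot):=f(A\cup\cdot)$ (with $A$ the edges chosen at the other vertices) submodularity gives $g(R)-g(\emptyset)\le\sum_{e\in R}\bigl(g(\{e\})-g(\emptyset)\bigr)$ pointwise, and taking expectations shows $\E[g(R)]-g(\emptyset)\le\sum_{e}q_{u,e}\bigl(g(\{e\})-g(\emptyset)\bigr)=\E[g(S_{u})]-g(\emptyset)$, where $S_{u}$ is the single-edge choice. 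Finally, $F$ is monotone (as $f$ is), so $F(q)\ge F((1-1/e)\x^{*})$; and $t\mapsto F(t\x^{*})$ is concave on $[0,1]$ and vanishes at $t=0$, because its second derivative $\sum_{e\ne e'}x^{*}_{e}x^{*}_{e'}\,\partial^{2}F/\partial x_{e}\partial x_{e'}$ is nonpositive by submodularity and multilinearity, whence $F((1-1/e)\x^{*})\ge(1-1/e)F(\x^{*})$. Chaining these inequalities with the decoupling bound and Lemma~\ref{lem:lp} yields the theorem.

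\textbf{Main obstacle.} The delicate point is the decoupling step: quantifying the correlation among the per-$u$ proposal streams that arises because each round produces at most one proposal overall, and pinning down precisely the regime $|U|=o(\sqrt{T})$ in which the accumulated $O(|U|^{2}/T)$ coupling error vanishes --- which is exactly why the clean bound is only available as $T\to\infty$. The remaining ingredients are routine: the reformulation of \lpa as an i.i.d.\ proposal process, and standard properties of the multilinear extension (monotonicity, ray-concavity) together with the one-line submodular inequality underlying the pipage step.
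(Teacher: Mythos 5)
Your proposal is correct and follows essentially the same route as the paper's proof: your decoupling of the per-$u$ proposal streams is the paper's Lemma~\ref{lem:u-side}, your geometric-series computation of $q_{u,e}$ is Lemma~\ref{lem:u-update}, your block-by-block exchange between the mutually-exclusive choice and the independent product is the paper's Theorem~\ref{thm:key} together with Lemma~\ref{lem:lpa_g_neg}, and the final scaling $F((1-1/e)\x^*)\ge(1-1/e)F(\x^*)$ is the cited Lemma~4.2 of Bansal et al. The only (cosmetic) difference is that you derive the key per-block inequality directly from subadditivity of marginal gains rather than by the paper's induction, and you apply the monotonicity slack $q\ge(1-1/e)\x^*$ once at the end rather than inside each block update.
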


\section{Experiments}
\label{sec:exp}

	In this section, we describe the experimental results for the movie recommendation application. Additional experiments using synthetic data on other submodular functions are relegated to the supplemental material. We use the MovieLens dataset \cite{harper2016movielens} for our purposes\footnote{Full code can be found at \url{https://bitbucket.org/karthikabinav/submodularmatching/src/master/}}. 
	
	\xhdr{Application.} We have a list of movies each associated with some genres and we have a set of users who come into the system at various times (\emph{e.g.,} a user logging into the website for a session). We have information about the ratings of every user for some (different) movies. Our goal is to recommend a small set of movies which the user hasn't rated thus far such that the list is relevant as well as diverse. We quantify the \emph{diversity} by using a weighted coverage function over the set of genres for each user. Hence the goal is to maximize the sum of these weighted coverage functions for all users.\footnote{The sum of submodular functions is also submodular.} This naturally fits within the framework proposed in this paper, where the set of movies form the side $U$ and the set of users form the side $V$.\footnote{See supplementary materials for details on a Linear Programming formulation of the offline problem.}
	
	\xhdr{Dataset and pre-processing.} In this dataset, we have 3952 movies, 6040 users and a total of 100209 ratings of the movies by the users. We choose 200 users who have given the most ratings and sub-sample 100 movies at random; this reduced dataset is used for experiments. Similar to~\citet{Faez-17}, we use a standard collaborative filtering approach \cite{collabFiltering} to complete the matrix of ratings. To create the graph we do the following. For a given pair of user $u$ and movie $m$, if $u$ hasn't rated $m$ we add an edge between $u$ and $m$. For a given user $u$, we compute the average predicted rating for every genre and use this average as the ``weight'' in the weighted coverage function. This gives a bias towards genres which the user has highly rated over ones they haven't. For every user we choose a random arrival probability (ensuring that the sum of arrival probabilities equals $1$).
	
	\xhdr{Algorithms.} We test both our CR-based (Algorithm~\ref{alg:cr}) and the \MP-based (Algorithm~\ref{alg:nadap}) experimentally. Additionally we consider the following two heuristics for our study. (1) \emph{Greedy:} At time step $t$, let $S_t$ be the set of edges chosen so far. When a vertex $v$ arrives, choose an available neighbor $u$ that maximizes $f(S_t \cup \{(u, v)\})-f(S_t)$. If no neighbor is available, we drop $v$. (2) \emph{Negative CR-based algorithm (\NEGCR):} We tweak \cra by replacing the initial independent sampling with the following procedure. At every $u$, we use the dependent rounding routine due to \citet{gandhi2006dependent} to obtain a semi-matching $\mathcal{M}_1$. In the online phase when a $v$ arrives, we sample one of its available neighbors in $\mathcal{M}_1$ uniformly and match it. If not, we drop $v$.
	
	\xhdr{Results and discussion.} We run two kinds of experiments for our purposes. First, we compare our algorithms against the baselines by varying two parameters $B$ and $\eta$. $B$ represents the number of times we can match a movie to an user and $\eta$ represents the number of movies matched to any user on arrival (in the theory $B=1, \eta=1$, but we experiment with different values). Second, we want to \emph{measure} diversity of recommendations of the various algorithms. To this end, we compare the various algorithms on the number of users who have various levels of \emph{coverage} (\ie how many users are shown recommendations greater than x \% of the total weight). The plots in Figure~\ref{fig:Mexp1} and the leftmost plot of Figure~\ref{fig:Mexp2} show the results for the first kind of experiments, while the right two plots in Figure~\ref{fig:Mexp2} show the results for the second kind.

		\begin{figure*}[!h]
			\centering
			\includegraphics[width=\linewidth]{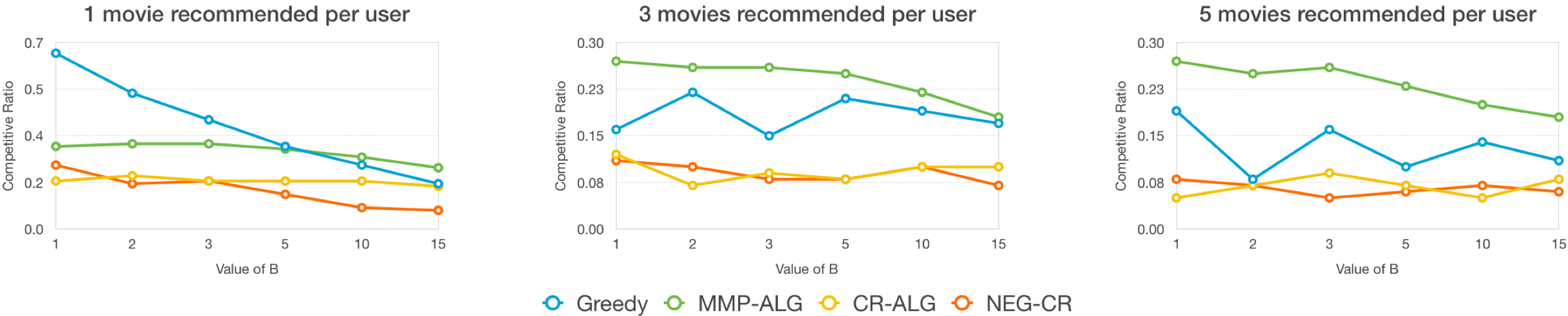}
			\caption{Results when the genre weights are the average of predicted ratings for users. The $x$-axis varies $B$ and the $y$-axis represents the ratio. (Left): $\eta=1$, (Center): $\eta=3$, (Right): $\eta=5$.}
			\label{fig:Mexp1}
		\end{figure*}
		
		\begin{figure*}[!h]
			\centering
				\includegraphics[width=\linewidth]{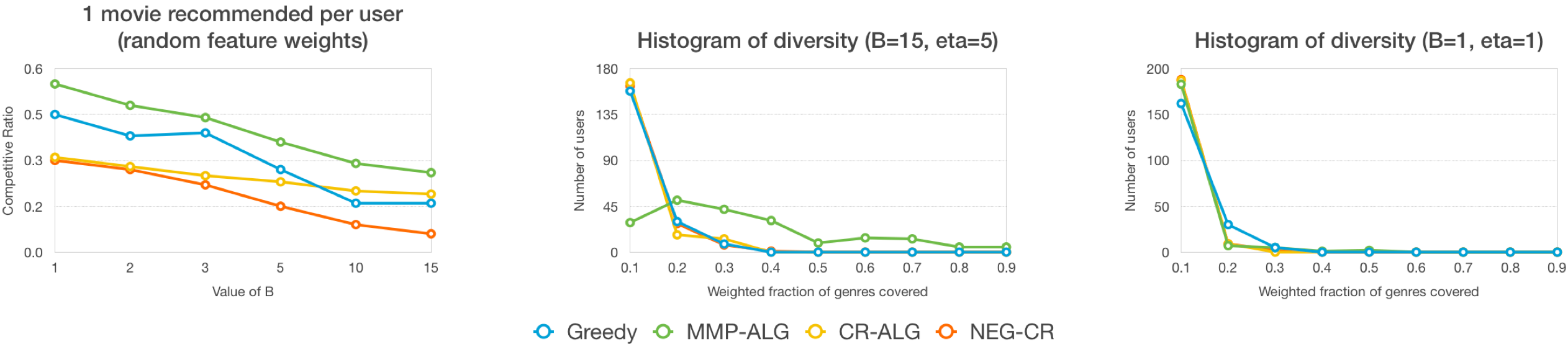}
				\caption{(Left): Same as the left plot in Figure~\ref{fig:Mexp1} with genre weight chosen $U[0, 1]$. (Center and Right): $x$-axis percentage of coverage of genres and $y$-axis number of users who fall in that range.}
				\label{fig:Mexp2}
		\end{figure*}

		In almost all cases, these plots show that \lpa is the clear winner and has the best performance. At times the Greedy algorithm does well, but as $B$ increases the performance drops quickly. Additionally, Greedy makes many calls to the submodular oracle at each online step, as compared to the other algorithms, which can be limiting in if the oracle evaluation is time-consuming. The surprising aspect on these experiments is that the other proposed algorithm $\cra$ does not perform even as well as Greedy. The explanation is that we assign non-integral arrival rates to each user. We show in the supplementary materials that when they are assigned integral rates, \cra's performance is comparable to \lpa and much better than Greedy. As we further show in the supplementary materials, for the budget-additive submodular function, however, even for fractional rates, \cra performs as well as \lpa (and much higher than the theoretical bounds). The diversity histograms show that \lpa is performs well and a good fraction of the users have a coverage greater than 50 \% and all the way up to 90\% (in the pragmatic case of $b=15, \eta=5$). However, the other algorithms have a coverage of at most 20 - 30\% for all users (even for the Greedy algorithm when $B=1, \eta=1$ where the competitive ratio is higher than \lpa).

	%%%%%% Conclusion %%%%%%%
	\vspace{-3mm}
	\section{Conclusion}
		\label{sec:conclusion}
		
		In this paper, we proposed a new model, Online Submodular bipartite matching (\osbm), which effectively captures notions such as relevance and diversity in matching markets. Many applications such as advertising, hiring diverse candidates, recommending movies or songs naturally fit within this framework. We propose two algorithms, one based on contention-resolution schemes and the other based on using the solution of the mathematical program directly; we give theoretical guarantees on their performance. The algorithm based on using the mathematical program directly is essentially tight even for the special case of linear objectives. Finally, via experiments we show that our algorithms do well in practice. We also proposed heuristics, some of which perform well on specialized submodular functions, and showed that our general algorithm is competitive with such algorithms as well.

%%%%%%%%%%%%%%%% Rough Notes from before %%%%%%%%%%%%%%%%%%%%%%%%

\COMM{
Recall that for the Ad words application. We have one advertiser. We have a bipartite graph $G=(U,V,E)$, where each $u$ represent a type of ads, which includes the specific information about an item (such as color) and the display venue, for example: a blue Toyota Camry displayed on Facebook. Each $v$ denotes a type of user. Now for simplicity still assume each $u$ has capacity $1$ and each edge $e=(u,v)$ has a bid $b_e$ and we have a whole budget $B$. In this context, the objective function should be 
$$f(S) =\min \big(B, \sum_{e \in S} b_e \big)$$ 
where $S$ forms semi-matching over $E$. Notice that the above model surely generalizes the classical online matching when $B=\infty$. Note that solving a general MS function over a bipartite matching polytope, we do not know the best result. Since the latter is an intersection of two matroid, there is some ratio $\frac{1}{2+\ep}$ for any given $\ep$ from local search. So far we are unsure if we can beat it for our special case: bipartite matching polytope. A typical approach is first solve the multilinear extension and we get a fractional $\x$ and loss a factor $1-1/e$ and then try to round $\x$ to an integral $\X$ (potentially losing some more?).  

\bluee{But for our online case, we just need a feasible fractional solution $\x$ in the bipartite matching polytope, such that $F(\x^*) \ge c \max_{\x \in \cP} F(\x) \ge c \max_{\x \in \cP \cap \{0,1\}^m } F(\x)=c  \cdot \OPT$}. 

\redd{In our special case when $f$ is defined as above, we can get a fractional feasible solution $\x^*$ such that $f(\x^*) \ge \OPT$. Notice that in our case, $f$ itself can be viewed as an extension over $[0,1]^m$. Here we just treat $\x^*$ can serve as a good online guide: the online theoretical analysis does not apply here}. In fact, in our case, if $f(\x^*)=B$, then we can continue by linear expectation and do an edge by edge analysis; If $f(\x^*) \le B$, the same goes. What I want to stress here is for either case, if we can prove that $\Pr[e \text{ is finally chosen}] \ge c*x_e$, we can not claim that we get an online ratio of $c$ as theoretical analysis. 

$$\left\{ \max y: y \le B, y=\sum_{e \in E} x_e b_e, x_u \le 1, \forall u, x_v \le 1, \forall v  \right\}$$
  
Consider a given feasible solution $\x$ to the above program, we see that $y(\x) \ge F(\x)$, where $F$ is the ME (multilinear extension) of $f$. In fact for any feasible solution $\x \in \cP$, suppose we define $y(\x)=\min \Big(B, \sum_{e \in E} x_e b_e \Big)$. Then we have that $y(\x) \ge F(\x)$. 

\begin{proof}
Consider the first case when $\sum_{e \in E} x_e b_e < B$. In this case we see $y(\x)=\sum_{e \in E} x_e b_e$. In contrast we see that $F(\x) \le \mu=y(\x)$. 
Let $X=\sum_{e \in E} b_e\cdot \mathbf{I}_e$, where $\{\mathbf{I}_e\}$ are indep and each is a Bernoulli with mean $x_e$. Then we have 
$$F(\x)=\Pr[X \le B] \E[ X | X \le B]+\Pr[X >B]\cdot B$$

Consider another case when $\sum_{e \in E} x_e b_e \ge B$. In this case we see 
$y(\x)=B \ge F(\x)$.
\end{proof}

Thus we see that in our context, MS itself can be viewed as an extension over $[0,1]^m$ but slightly larger than that of $F(\x)$. Another interesting fact is that. Starting from $\x^*$ to the \LP above, suppose we prove that if we can prove that $\Pr[e \text{ is finally chosen}] \ge c*x_e$, we can not claim that we get an online ratio of $c$ as before. Our final goal is to prove $\E[f(\Z)] \ge c y(\x)$ where $\Z$ is the final integral vector after incorporating online arrivals and algorithm. For the \LP-based, we have that $\Pr[Z_e=1] \ge (1-1/e)x_e$. So can we claim for the $c$? We do not know but negative correlation surely helpful than positive correlation. I think the WS arrival at $y(\x)=B$ and so the overflow can be maximized. Suppose by attenuation we can ensure $\Pr[Z_e]=(1-1/e)x_e$, so what? Apply the result in online submodular when $|U|=o(\sqrt{T})$ may be helpful and we can continue to apply Chernoff bound. So after considering more real assumptions and design more proper algorithms.  
such as each $b_e \le 1$ or is there any assumption for $b(u)=\sum_{e \sim u}b_e$ and also most importantly after consider the $B'$ matching: each $u$ can be matched at most $B'$ times and surely we should discuss the relation with $B'$ with $\sum_{u } b(u)$? \bluee{note that our model featuring the global budget $B$ can NOT be viewed as special case of Adwords in case each $u$ has a budget}. 

Aother application is as follows: each $u$ is a cluster which consists of $3$ copies, each $u$ denotes a job while each online $v$ is a candidate, which is associated with a vector $\mathbf{w}_v \in {0,1}^K$, where $K$ is the set of attributes. for example (Black, female, Asian, Sexual orientation). Each $u$ has three copies means we have $3$ opening for that. Suppose we want to maximize the diversity of all candidates enrolled for each job $u$. Let $w_k$ be the weight of attribute $k \in K$. For each cluster $u' \in C_u$ should have the same set of neighbors in $V$. Consider a given fractional $\x$. Let $x_{u',k}=\sum_{v \sim u'} x_{u',v} w_{v,k}$, the contribution of coverage from $u'$ to the first attribute. Thus we have 
$$\max  \sum_{u \in U}\sum_{k \in [K]}\min \big( \sum_{u' \in C_u} x_{u',k}, 1\big) w_k$$    

where $\tau(u)=\sum_{k \in [K]}\min \big( \sum_{u' \in C_u} x_{u',k}, 1\big) w_k$ quantifies the diversity associated with job $u$. We hope for each $u$, all candidates can be diversified as much as possible. Similarly as before we can solve a $\x$ such that $f(\x) \ge \OPT$ while $f$ itself can be viewed as an extension over $[0,1]^m$. 
}

\section*{Acknowledgements}
Aravind Srinivasan's research was supported in part by NSF Awards CNS-1010789, CCF-1422569 and CCF-1749864, and by research awards from Adobe, Inc. The research of Karthik Sankararaman and Pan Xu was supported in part by NSF Awards CNS 1010789 and CCF 1422569.

 The authors would like to thank the anonymous reviewers for their helpful feedback.

\bibliographystyle{aaai}
\bibliography{refs_short}

\newpage
\onecolumn
\section{Supplementary Materials}
\subsection{Details on prior work on submodularity}
As discussed in the preliminaries of the main section, we incorporate ideas from prior work on submodularity. In particular we use the following theorem from~\cite{bansal2012solving} and its extension called contention-resolution schemes~(\cite{vondrak2011submodular}).

\begin{theorem}(\cite{bansal2012solving})\label{thm:toc}
Let $f$ be a non-negative monotone submodular function over $E$ with $|E|=m$. Given a fractional vector $\x \in [0,1]^m$, let $\X=(X_e)_{e \in E}$ be a random binary vector such that each $X_e$ is a Bernoulli random variable with mean $x_e$. Suppose $\Z \le \X$ is another random binary vector satisfying the following properties (1) \& (2) for some $\alp$. Then we have that $\E[f(\Z)] \ge \alpha \E[f(\X)] $.

\begin{enumerate}
	\item \textbf{Marginal Property}: $\Pr[Z_e=1 | X_e=1] \ge \alp$.\\
	\item \textbf{Monotonicity Property}: $\Pr[ Z_e=1 | \X=\ba] \ge \Pr[ Z_e=1 |\X=\bb]$, $ \forall \mathbf{a} \le \mathbf{b} \in \{0,1\}^m, \forall e\in \supp(\ba)=\{e': a_{e'}=1\}$.
%$ \forall \mathbf{a} \le \mathbf{b} \in \{0,1\}^m, \forall e\in \supp(\ba)=\{e': a_{ecite'}=1\}:~~\Pr[ Z_e=1 | \X=\ba] \ge \Pr[ Z_e=1 |\X=\bb]$$
%Notice that by definition of multilinear extension, we have $\E[f(\X)]=F(\x)$.
\end{enumerate}

\end{theorem}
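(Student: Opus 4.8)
The plan is to prove the contention-resolution guarantee $\E[f(\Z)]\ge\alp\,\E[f(\X)]$ by a single fixed-order telescoping of $f$ together with a correlation (Harris/FKG) inequality, working (as in the paper's sampling) under the natural assumption that the coordinates $X_e$ are mutually independent. Fix an arbitrary ordering $e_1,\dots,e_m$ of $E$, and for a realization write $\X_{<i}$ for the restriction of $\X$ to $\{e_1,\dots,e_{i-1}\}$ and $\Z_{<i}$ for the analogous restriction of $\Z$; set $h_i(\X_{<i})\doteq f(\supp(\X_{<i})\cup\{e_i\})-f(\supp(\X_{<i}))$, the marginal gain of $e_i$. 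Telescoping gives $f(\Z)=\sum_i Z_{e_i}\,[f(\supp(\Z_{<i})\cup\{e_i\})-f(\supp(\Z_{<i}))]$; since $\supp(\Z_{<i})\subseteq\supp(\X_{<i})$, submodularity (diminishing returns) yields $f(\supp(\Z_{<i})\cup\{e_i\})-f(\supp(\Z_{<i}))\ge h_i(\X_{<i})\ge 0$, whence $\E[f(\Z)]\ge\sum_i\E[Z_{e_i}\,h_i(\X_{<i})]$. The identical telescoping applied to $\X$ is lossless and gives $\E[f(\X)]=\sum_i\E[X_{e_i}\,h_i(\X_{<i})]=\sum_i x_{e_i}\,\E[h_i(\X_{<i})]$, the last step by independence of $X_{e_i}$ from $\X_{<i}$. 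Because both decompositions use the \emph{same} ordering, it suffices to prove the per-element bound $\E[Z_{e_i}\,h_i(\X_{<i})]\ge\alp\,x_{e_i}\,\E[h_i(\X_{<i})]$ for each $i$.

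To establish the per-element bound I would first write $\E[Z_{e_i}\,h_i(\X_{<i})]=\E[\Pr[Z_{e_i}=1\mid\X]\,h_i(\X_{<i})]$ and integrate out $X_{e_i}$ and $\X_{>i}$ to obtain $\E[q_i(\X_{<i})\,h_i(\X_{<i})]$, where $q_i(\ba)\doteq\Pr[Z_{e_i}=1\mid\X_{<i}=\ba]$. Two monotonicity facts are key. First, $h_i$ is non-increasing in $\X_{<i}$: this is exactly submodularity, since enlarging the base set shrinks the marginal gain of $e_i$. Second, $q_i$ is non-increasing in $\X_{<i}$: fixing $X_{e_i}=1$ and $\X_{>i}=\mathbf{c}$, the Monotonicity Property gives $\Pr[Z_{e_i}=1\mid(\ba,1,\mathbf{c})]\ge\Pr[Z_{e_i}=1\mid(\bb,1,\mathbf{c})]$ whenever $\ba\le\bb$ (both full vectors contain $e_i$), while for $X_{e_i}=0$ both conditional probabilities vanish because $\Z\le\X$; averaging these coupled cases over the common law of $(X_{e_i},\X_{>i})$ preserves the inequality, so $q_i(\ba)\ge q_i(\bb)$. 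Since $\X_{<i}$ is a product of independent Bernoullis and $q_i,h_i$ are both non-increasing, the Harris/FKG inequality gives $\E[q_i h_i]\ge\E[q_i]\,\E[h_i]$.

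It remains to lower bound $\E[q_i(\X_{<i})]$. By the tower property $\E[q_i(\X_{<i})]=\Pr[Z_{e_i}=1]$, and combining $\Z\le\X$ with the Marginal Property yields $\Pr[Z_{e_i}=1]=\Pr[Z_{e_i}=1\mid X_{e_i}=1]\,\Pr[X_{e_i}=1]\ge\alp\,x_{e_i}$. Therefore $\E[Z_{e_i}\,h_i(\X_{<i})]\ge\E[q_i]\,\E[h_i]\ge\alp\,x_{e_i}\,\E[h_i(\X_{<i})]$, which is the desired per-element bound. Summing over $i$ gives $\E[f(\Z)]\ge\alp\sum_i x_{e_i}\,\E[h_i(\X_{<i})]=\alp\,\E[f(\X)]$, completing the argument.

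I expect the main obstacle to be the per-element correlation step. The quantity $\E[Z_{e_i}\,h_i(\X_{<i})]$ cannot be factored directly, since the conditional retention probability $q_i$ and the marginal gain $h_i$ are coupled through the shared prefix randomness $\X_{<i}$; moreover a careless bound (for instance replacing $h_i(\X_{<i})$ by the marginal over the full realized set $\X$) loses too much and fails to recover the factor $\alp$. The resolution hinges on the observation that submodularity and the Monotonicity Property drive $h_i$ and $q_i$ in the \emph{same} (non-increasing) direction, so FKG delivers a correlation inequality of precisely the right sign. The remaining points that require care are verifying the monotonicity of $q_i$ through the coupling that holds $X_{e_i}$ and $\X_{>i}$ fixed, and confirming that using one common ordering makes the $f(\X)$ and $f(\Z)$ telescopings align term by term.
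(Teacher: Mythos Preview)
The paper does not supply its own proof of this theorem: it is quoted verbatim as Theorem~4.3 of \cite{bansal2012solving} in the supplementary ``Details on prior work on submodularity'' section and used as a black box. So there is no in-paper argument to compare against.

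That said, your proposal is correct and is essentially the argument Bansal et al.\ give. The three pillars---(i) telescope $f$ in a fixed order and use diminishing returns to replace the $\Z$-prefix by the larger $\X$-prefix, (ii) show that both the retention probability $q_i(\cdot)=\Pr[Z_{e_i}=1\mid \X_{<i}=\cdot]$ and the marginal gain $h_i(\cdot)$ are coordinatewise non-increasing (the first from the Monotonicity Property averaged over the independent suffix, the second from submodularity), and (iii) apply FKG on the product measure of $\X_{<i}$ to decouple them---are exactly the standard route. Your use of independence of the $X_e$'s is the intended reading (the paper instantiates $\X$ as $\cR_{\x^*}$, i.e., independent Bernoullis), and your tower-property computation $\E[q_i(\X_{<i})]=\Pr[Z_{e_i}=1]\ge\alp\,x_{e_i}$ is clean. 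One cosmetic point: your telescoping identity for $f(\Z)$ silently uses $f(\emptyset)=0$, which the paper does assume in its definition of submodularity, so this is fine but worth stating.
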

 By definition of the multilinear extension we have that $\E[f(\X)]=F(\x)$. Thus the theorem can be viewed as conditions when ``linearity of expectation'' can be applied to $F$.

\subsection{Submodular Welfare Maximization (SWM) as a special case of our problem}
	We will now show how we can reduce the SWM problem to our model. Recall that the SWM problem is defined as follows. We have $n$ vertices in $U$ which are available offline. With each vertex $u \in U$ we have a monotone submodular function $g_u: 2^V \rightarrow \mathbb{R}$ associated with it. When a vertex $v$ arrives we need to match $v$ to one of its neighbors in $U$. At the end of the online phase, let $S_1, S_2, \ldots, S_n$ be the set of vertices assigned to vertices $1, 2, \ldots, n$ respectively. The goal is to maximize the sum $\sum_{i=1}^{n} w_i(S_i)$.

	Note that sum of submodular functions is a submodular function. Hence to make the objective function of SWM fit our framework, we do the following. Let $E_u$ denote the set of edges incident to $u$. Define a function $\tilde{g}_u: E_u \rightarrow \mathbb{R}$. For any subset $S \subset E_u$, we let $\tilde{g}(S) := g(S_v)$ where $S_v$ is the set of endpoints of $S$ in $V$. The non-trivial part to handle is that in SWM any vertex in $U$ can be matched multiple times, while in our setting we allow any $U$ to be matched exactly once. We can overcome this by creating additional vertices as follows. For any $u \in U$, create $|\delta(u)|*T$ copies. A copy, indexed by $v \in \delta(u)$ and $t \in [T]$ has an edge only to $v$. Additionally, wlog we can enforce that at time $t$, an algorithm can be matched to vertices whose copy is indexed by $t$ (this doesn't change the optimal value). Therefore, we now have an instance with a submodular objective and matching constraints.
	
	Consider an arrival sequence $\tau$. Let the optimal allocation in SWM under this arrival be $S_{\tau, 1}, S_{\tau, 2}, \ldots, S_{\tau, n}$ at times $T_1, T_2, \ldots, T_{n}$. Then note that by considering the edges $(u_{S_{\tau, i}, T_{\tau, i}}, S_{\tau, i})$ yields the same value to the objective in the \osbm model. Additionally, note that the above argument also holds in the other direction since there is a one-to-one mapping between the two optimal solutions.

\subsection{Proof of Lemma~\ref{lem:lp}}
 \begin{proof}
Observe that the polytope  $\mathcal{P}$ defined by Constraints~\eqref{cons:v}-\eqref{cons:e} is downward closed. By Lemma 4 in \cite{Adam17}, we can apply continuous greedy algorithm to get an approximation solution $\x^*$ such that $\x^*$ is feasible to $\cP$ and $F(\x^*) \ge (1-1/e) \max_{\y \in \cP}f^{+}(\y)$, where $f^{+}$ is the concave closure of $f$. For each given $\y \in \cP$, by definition $f^{+}(\y)$ is the maximum value of $\E[f(\cR)]$ over all possible random sets $\cR$ satisfying the marginal distribution that $\Pr[e \in \cR] \le y_e$ for each $e$. Thus we conclude that $ \max_{\y \in \cP}f^{+}(\y)$ is an upper bound of the offline optimal and hence proving our claim.
 \end{proof}

\subsection{Proof of Theorem~\ref{thm:cra}}
\begin{proof}
Let $\Z=(Z_e)$ be the indicator vector for $e$ being matched in the \cra and let $\mathbf{X}$ be as defined in Theorem~\ref{thm:toc} with $\mathbf{x} = \mathbf{x}^*$ where $\mathbf{x}^*$ is the optimal solution to the offline program. We show that $\Z$ satisfies the two properties stated in Theorem \ref{thm:toc} with $\alp=\frac{1}{2}(1-e^{-1/2})$. Hence we have that $\E[f(\Z)] \ge \alp \E[f(\X)]$. Moreover observe that $\E[f(\X)]=F(\x^*) \ge (1-1/e) \E[\OPT]$. Combining these two facts proves Theorem~\ref{thm:cra}.

Consider an edge $e=(u,v)$ with $X_e=1$. Let $|E_{\X}(u)|=k_1$ and $|E_{\X}(v)|=k_2$, where $k_1$ and $k_2$ are the number of edges incident to $u$ and $v$ in $\X$\footnote{More precisely, we consider the sub-graph induced by the edges $e$ with $X_e=1$.} including $e$, respectively. A sufficient condition for $Z_e=1$ given $X_e=1, |E_{\X}(u)|=k_1, |E_{\X}(v)|=k_2$ is that $Y_e=1$ in this conditional space and that $Z_e=1$ given that $Y_e=1$. Thus we have,
\begin{small}
\begin{align*}
& \textstyle \Pr[Z_e=1 {~\Big |~} X_e=1, |E_{\X}(u)|=k_1, |E_{\X}(v)|=k_2] \\
&\textstyle \geq \Pr[Y_e=1 {~\Big |~} X_e=1,  |E_{\X}(u)|=k_1] * \Pr[  Z_e=1 {~\Big |~}Y_e=1, |E_{\X}(v)|=k_2, |E_{\X}(u)|=k_1] \\
& \textstyle = \frac{1}{k_1} \sum_{t=1}^T \frac{1}{T}\frac{1}{k_2}\Big(1-\frac{1}{T \cdot k_2}  \Big)^{t-1} =\frac{1}{k_1}\Big(1-\exp\big( -\frac{1}{k_2} \big) \Big). 
\end{align*}
\end{small}

In the second-last equality we used the fact that $Y_e=1$ is set uniformly for every edge $e \in E(u)$ and thus $\Pr[Y_e=1 {~\Big |~} X_e=1,  |E_{\X}(u)|=k_1] = \frac{1}{k_1}$. Moreover we claimed that $\Pr[  Z_e=1 {~\Big |~}Y_e=1, |E_{\X}(v)|=k_2, |E_{\X}(u)|=k_1] = \sum_{t=1}^T \frac{1}{T}\frac{1}{k_2}\Big(1-\frac{1}{T \cdot k_2}  \Big)^{t-1}$. The argument is as follows. Consider a given $e=(u, v)$ with $Y_e=1$. Note that from the algorithm, the only vertex $u$ can be matched to is a previous arrival of $v$. At a given time-step $t \in [T]$, this happens with probability $\tfrac{1}{k_2 |V|}$. Moreover from the integral arrival rates assumption wlog we have that $|V| = T$. Thus the probability at $u$ is safe at $t \in [T]$ is $\Big(1-\frac{1}{T \cdot k_2}  \Big)^{t-1}$. The probability that $v$ arrives at $t \in [T]$ and is assigned to $u$ is $\tfrac{1}{k_2 |V|}$. The event that $(u, v)$ is matched is mutually exclusive across the $T$ time-steps. Putting these arguments together we get that $\Pr[  Z_e=1 {~\Big |~}Y_e=1, |E_{\X}(v)|=k_2, |E_{\X}(u)|=k_1] = \sum_{t=1}^T \frac{1}{T}\frac{1}{k_2}\Big(1-\frac{1}{T \cdot k_2}  \Big)^{t-1}$.

Given the above derivation, we are now ready to prove the two properties. First we show that the \textbf{Marginal Property} holds with $\alp=(1-e^{-1/2})$. Notice that $\E[k_1]=1+\sum_{e' \in E(u), e' \neq e} x^*_{e'} \le 2$ and $\E[k_2]=1+\sum_{e' \in E(v), e' \neq e} x^*_{e'} \le r_v=2$, since $\x^*=(x^*_e)$ is feasible to \MP \eqref{LP:offline}. Also the two functions, $\frac{1}{k_1}$ and $\Big(1-\exp\big( -\frac{1}{k_2} \big) \Big)$, are convex in $k_1$ and $k_2$ respectively. Thus by Jensen's inequality, we have the following.
 \begin{small}
\begin{align*}
&\textstyle \Pr[Z_e=1| X_e=1] \textstyle \geq \E_{k_1, k_2} \Big[ \Pr[Z_e=1 {~\Big |~}  X_e=1, |E_{\X}(u)|=k_1, |E_{\X}(v)|=k_2]  \Big] \textstyle \ge \frac{1}{2}\Big(1-e^{-1/2} \Big) \doteq \alp
\end{align*}
\end{small}
The \textbf{Monotonicity Property} can be derived from the fact that $\frac{1}{k_1}$ and $\Big(1-\exp\big( -\frac{1}{k_2} \big) \Big)$ are decreasing in $k_1$ and $k_2$ respectively. Thus we are done.  
 \end{proof}

\subsection{Full Proof of Theorem \ref{thm:lpa}}
The bounds in Theorem~\ref{thm:lpa} comes from two parts. The first $(1-1/e)$ factor is obtained from approximately solving \MP \eqref{LP:offline} offline while the second $(1-1/e)$ factor is the ratio obtained because of the online nature of the problem. \emph{Note that our online analysis is tight: \lpa losses at least one factor of $(1-1/e)$ during the online stage even when $f$ is linear.} We first provide a sketch here and then prove it in detail. Recall that $\mathbf{X}$ denotes the indicator for whether an edge is matched. The notations for this proof are overloaded and hence different from that in Theorem~\ref{thm:cra}.

\xhdr{Proof sketch.} For each $u$, let $\bI_u$ indicate if $u$ is finally matched ($\bI_u=1$) after the $T$ online rounds. Set $x_u^*=\sum_{e \in E(u)} x_e^*$. We then prove the following.

\begin{lemma} \label{lem:u-side}
 $\{\bI_u: u\in U\}$ are asymptotically independent with each $\Pr[\bI_u=1]=1-e^{-x_u^*}$ when $T \rightarrow \infty$ and $|U|=o(\sqrt{T})$. 
\end{lemma}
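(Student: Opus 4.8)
The statement to prove is Lemma~\ref{lem:u-side}: the indicators $\{\bI_u : u \in U\}$ are asymptotically independent, with $\Pr[\bI_u=1] \to 1-e^{-x_u^*}$ as $T \to \infty$, under the assumption $|U| = o(\sqrt{T})$. Let me think about how to approach this.

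First, the marginal. For a fixed $u$, in each round $t$, the probability that $u$ gets matched at round $t$ is: some $v$ arrives (prob $p_v$), $v$ samples edge $e=(u,v)$ (prob $x_e^*/r_v$), and $u$ is still available. The product $p_v \cdot x_e^*/r_v = p_v x_e^*/(T p_v) = x_e^*/T$. Summing over $e \in E(u)$, the "attempt probability" per round is $\sum_{e \in E(u)} x_e^*/T = x_u^*/T$, independent of which $v$. So conditioned on $u$ being available, it gets matched in round $t$ with probability exactly $x_u^*/T$. Hence $\Pr[\bI_u = 0] = (1 - x_u^*/T)^T \to e^{-x_u^*}$.

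Wait — but that's not quite right, because "$u$ is still available at round $t$" depends on the history, and the per-round attempt at $u$ is only $x_u^*/T$ *given availability*, but availability is exactly the thing we're computing. Actually it works cleanly: let $q_t = \Pr[u \text{ available at start of round } t]$. Then $\Pr[u \text{ available at start of } t+1] = \Pr[u \text{ available at } t] \cdot (1 - x_u^*/T)$ — wait, is the attempt probability at $u$ in round $t$ really $x_u^*/T$ regardless of the rest of the history? The edge-sampling at $v$ is independent of everything, and $v$'s arrival is independent. So given $u$ is available, the chance $u$ gets matched this round is $\sum_v p_v \cdot \mathbb{1}[(u,v)\in E] \cdot x_{(u,v)}^*/r_v = \sum_{e \in E(u)} x_e^*/T = x_u^*/T$. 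Yes. So $q_{t+1} = q_t(1 - x_u^*/T)$, giving $\Pr[\bI_u = 0] = q_{T+1} = (1-x_u^*/T)^T \to e^{-x_u^*}$. This is actually exact for the marginal, no asymptotics needed beyond the limit.

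**The asymptotic independence.**

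The harder part. The coupling between different $u$'s comes from the fact that in a single round, only one $v$ arrives and it picks only one edge, so at most one $u \in N(v)$ is matched per round — there's negative correlation in a single round, plus correlation through the evolving availability vector. The standard approach (e.g. Haeupler et al., and the general technique in these KIID papers) is to bound the deviation: show that for any fixed subset $S \subseteq U$, $\Pr[\bigwedge_{u \in S} \bI_u = 0]$ is within $o(1)$ of $\prod_{u\in S}\Pr[\bI_u=0] = \prod_{u \in S} e^{-x_u^*}$. I would do this by conditioning on the arrival sequence and analyzing the process round by round, tracking the vector of availabilities $(A_u^{(t)})_{u \in S}$. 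The key estimate: the probability that a matched edge in round $t$ has *both* endpoints' "type" interacting with $S$ in a way that breaks the product structure is $O(|U|^2/T^2)$ per round (two specific $u$'s both being targeted), and over $T$ rounds this sums to $O(|U|^2/T) = o(1)$ precisely when $|U| = o(\sqrt{T})$. That is where the $|U| = o(\sqrt{T})$ hypothesis is used: second-order collision terms must vanish.

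**Steps in order, and the main obstacle.**

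First, I would fix the arrival sequence / work with the raw randomness and establish the exact marginal recursion above. Second, I would set up the joint analysis: for a fixed $S \subseteq U$, write $\Pr[\bigwedge_{u\in S}\bI_u=0]$ and show it evolves round-by-round almost multiplicatively, with per-round multiplicative factor $\prod_{u \in S}(1 - x_u^*/T)$ up to a correction of order $O(|S|^2/T^2)$ coming from rounds where the single arriving vertex could match to two different members of $S$ (these can't both happen, creating the only deviation from independence). Third, I would multiply the per-round bounds: $\prod_{t=1}^T \left(\prod_{u \in S}(1-x_u^*/T) + O(|S|^2/T^2)\right) = \prod_{u\in S}e^{-x_u^*} + o(1)$, using $\sum_t O(|S|^2/T^2) = O(|S|^2/T) \le O(|U|^2/T) = o(1)$. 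By inclusion-exclusion over subsets (or directly), asymptotic independence of the $\bI_u$ follows. I expect the main obstacle to be the second step: carefully showing the per-round joint survival probability factors up to the claimed $O(|U|^2/T^2)$ error, since one must handle the dependence of current availabilities on the full past and argue that, conditioned on any past, the "bad" event (two members of $S$ simultaneously targeted) has probability $O(|U|^2/T^2)$ while the "good" events act independently across the members of $S$ to first order. This requires a clean union-bound / second-moment argument on the per-round matching attempt, plus care that the accumulated error stays additive rather than compounding.
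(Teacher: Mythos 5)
Your proposal is correct in substance, and for the independence part it takes a genuinely different route from the paper. Both proofs compute the marginal the same way (per round, $u$ is assigned an edge with probability $x_u^*/T$, so $\Pr[\bI_u=0]=(1-x_u^*/T)^T\to e^{-x_u^*}$; your availability recursion is equivalent since in \lpa a vertex $u$ is matched iff it is ever assigned). For independence, the paper conditions on an arbitrary event $\bigwedge_{w\in U_1}\{\bI_w=1\}\wedge\bigwedge_{w\in U_2}\{\bI_w=0\}$ and sandwiches the conditional probability $\Pr[\bI_u=0\mid\cdot]$ between two explicit expressions, using a Poisson tail bound ($\Pr[N_w>\ln T]=O(T^{-2})$) to argue that with high probability at most $|U|\ln T$ rounds are ``reserved'' for the conditioning; both bounds converge to $e^{-x_u^*}$. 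You instead compute the joint survival probabilities directly: since at most one offline vertex can be targeted per round, $\Pr[\bigwedge_{u\in S}\bI_u=0]=(1-\sum_{u\in S}x_u^*/T)^T$ exactly, and comparing this with $\prod_{u\in S}(1-x_u^*/T)^T$ gives a multiplicative error $\exp(O(|S|^2/T))$, which is where $|U|=o(\sqrt T)$ enters cleanly. Your route is arguably tidier and makes the role of the hypothesis more transparent; note also that your worry about availability dissolving is justified --- availability never affects $\bI_u$ because the first assignment to $u$ always succeeds, so the per-round events are genuinely independent across rounds and no history-tracking is needed. The one step you should make explicit is passing from factorization of the all-zero events to full asymptotic independence (including $\bI_u=1$ events) via inclusion--exclusion; for subsets of bounded size this is immediate, and for growing $|U|$ it requires controlling the alternating sum, a point at which the paper's own argument is comparably informal.
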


Let $\X \in \{0,1\}^m$ be the indicator vector for $e$ being matched in $\lpa$. Let $\X_u \in \{0,1\}^m$ be the restriction of $\X$ onto the block $E(u)$ with remaining entries all zeros such that $\X=\sum_{u \in U}\X_u$. \emph{W.l.o.g.} assume that the edges are grouped by blocks of $E(u)$ for all $u \in U$: the first few entries of $\X$ corresponds to some $E(u)$ followed by another $E(u')$ and so on. Let $\sE_e$ be an $m$-dimensional unit vector with a $1$ only at coordinate $e$.  Lemma~\ref{lem:u-update} characterizes the distribution of $\X_u$. 

\vspace{0.1in}
\begin{lemma} \label{lem:u-update}
$\Pr[\X_u=\bz]=e^{-x_u^*}, ~~\Pr[\X_u=\sE_e]=(1-e^{-x_u^*})\frac{x_e^*}{x_u^*}, ~~\forall e \in E(u)$
\end{lemma}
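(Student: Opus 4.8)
The plan is to analyze the online process block by block at a fixed offline vertex $u$ and compute the distribution of the random indicator sub-vector $\X_u$ directly, using the independence of arrivals across rounds together with the fact that the matching at $u$ can only change when $u$ is \emph{still available}. First I would observe that $\X_u$ can take only $|E(u)|+1$ possible values: either $\bz$ (meaning $u$ is never matched during the $T$ rounds) or $\sE_e$ for exactly one $e=(u,v)\in E(u)$ (meaning $u$ gets matched, and the edge that matched it is $e$). This is because $u$ has unit capacity, so at most one incident edge is ever added. Hence it suffices to compute $\Pr[\X_u=\bz]$ and $\Pr[\X_u=\sE_e]$ for each $e$, and check they sum to $1$.

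For $\Pr[\X_u=\bz]$: at each round $t\in[T]$, conditioned on $u$ being available, the probability that $u$ gets matched in that round is the probability that some neighbor $v$ of $u$ arrives \emph{and} the sampled edge at $v$ equals $(u,v)$. Since $v$ arrives with probability $p_v=r_v/T$ and then edge $e=(u,v)$ is sampled with probability $x_e^*/r_v$, the per-round probability of matching $u$ through $e$ is $(r_v/T)(x_e^*/r_v)=x_e^*/T$; summing over $e\in E(u)$ gives $x_u^*/T$ per round. Because arrivals are independent across the $T$ rounds and the event ``$u$ never gets matched'' is exactly the event that none of these $T$ independent trials succeeds, I would write $\Pr[\X_u=\bz]=(1-x_u^*/T)^T$, which tends to $e^{-x_u^*}$ as $T\to\infty$. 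Then, for a fixed $e\in E(u)$, conditioning on the first round $t$ in which $u$ is matched, I get $\Pr[\X_u=\sE_e]=\sum_{t=1}^T (1-x_u^*/T)^{t-1}(x_e^*/T)$, a geometric-type sum that evaluates to $\frac{x_e^*}{x_u^*}\bigl(1-(1-x_u^*/T)^T\bigr)\to (1-e^{-x_u^*})\frac{x_e^*}{x_u^*}$; note the ratio $x_e^*/x_u^*$ emerges naturally because, conditioned on $u$ being matched at round $t$, the matching edge is $e$ with probability proportional to $x_e^*$. Finally I would verify $\Pr[\X_u=\bz]+\sum_{e\in E(u)}\Pr[\X_u=\sE_e]=1$, which holds since $\sum_e x_e^*/x_u^*=1$.

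The one subtlety — and the step I expect to require the most care — is justifying that the per-round matching probability for $u$ is \emph{exactly} $x_u^*/T$ independently of the history, i.e.\ that conditioning on ``$u$ still available at the start of round $t$'' does not distort the distribution of what arrives and what is sampled at round $t$. This is true because the arrival at round $t$ and the internal edge-sampling at round $t$ are independent of everything that happened in rounds $1,\dots,t-1$ (the distribution $\{p_v\}$ is fixed and the algorithm's sampling at $v$ uses fresh randomness), so the only thing the history controls is whether we \emph{are} in round $t$ with $u$ available — which is precisely the conditioning event. Once this is spelled out, the computation is the routine geometric-sum evaluation and limit above; in the limit $T\to\infty$ this gives exactly the claimed values, completing the proof. (This also dovetails with Lemma~\ref{lem:u-side}, whose marginal $\Pr[\bI_u=1]=1-e^{-x_u^*}$ is consistent with $1-\Pr[\X_u=\bz]$ here.)
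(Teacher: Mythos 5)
Your proposal is correct and follows essentially the same route as the paper: both arguments decompose over the first round $t$ at which $u$ is matched, observe that conditioned on $u$ being matched at round $t$ the matching edge is $e$ with probability $\frac{x_e^*/T}{x_u^*/T}=\frac{x_e^*}{x_u^*}$, and combine this with $\Pr[\bI_u=1]\to 1-e^{-x_u^*}$. The only cosmetic difference is that you evaluate the geometric sum $\sum_{t=1}^T (1-x_u^*/T)^{t-1}(x_e^*/T)$ explicitly, whereas the paper factors the answer as $\Pr[\X_u=\sE_e\mid\bI_u=1]\cdot\Pr[\bI_u=1]$ and imports the second factor from Lemma~\ref{lem:u-side}.
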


Combining Lemmas \ref{lem:u-side} and \ref{lem:u-update}, we can view $\X$ alternatively as follows. First, start with an arbitrary vector $\mathbf{A} $ of dimension $m$. Then update the first block of $\mathbf{A}$ corresponding to some $E(u)$ by setting $\mathbf{A}_u=\bz$ with probability $e^{-x_u^*}$ and $\mathbf{A}_u=\sE_e$ with probability $\Big(1-e^{-x_u^*}\Big)\frac{x_e^*}{x_u^*}$, for every $e \in E(u)$ (denote this procedure as (*)). Now apply (*) to the remaining blocks one-by-one independently. Set $\mathbf{X}=\mathbf{A}$ at the end of this process.

In order to get the desired lower bound on $\E[f(\X)]$ we consider the following random vector $\Y$. Set $Y_e=1$ with probability $(1-1/e)x_e^*$ and with the remaining probability set it to be $0$. From Lemma 4.2 in \cite{bansal2012solving}, this implies that $\E[f(\Y)] = F(\Y) \geq (1-1/e) F(\x^*)$. The main idea of the proof is to show that after every update of (*) the invariant that the value of $\E[f(\X)]$ is at least as large as $F(\Y)$ continues to hold. Thus the final random vector $\X$ satisfies that $\E[f(\X)] \geq F(\Y)$. Thus we get $\E[f(\X)] \geq F(\Y) \geq (1-1/e) F(\x^*)$. This proves Theorem~\ref{thm:lpa} since $F(\x^*) \ge (1-1/e)\E[\OPT]$.

\xhdr{Proof of Lemma~\ref{lem:u-side}}
\begin{proof}
Consider a given $u$. Notice that in each round $t$, $u$ is assigned to a neighboring edge in $E(u)$ with probability $x^*_u/T$. Thus after $T$ online rounds it gets assigned at least once with probability $1-\Big( 1-x_u^*/T  \Big)^T \sim 1-e^{-x_u^*}$. Now we prove the statement about the independence for different $u$. Consider two given disjoint subsets $U_1$ and $U_2$ and a given vertex $u \notin U_1 \cup U_2 $, we show that 
$$\Pr\Big[\bI_u=x {~\Big |~} \bigwedge_{w \in U_1} \Big(\bI_w=1 \Big)  \bigwedge_{w \in U_2} \Big(\bI_w=0 \Big)  \Big]=\Pr[\bI_u=x], \forall x \in \{0,1\}$$

For event $\bigwedge_{w \in U_2} \Big(\bI_w=0 \Big)$ to occur, a sufficient condition is that in each round, no edge from $\bigcup_{w\in U_2} E(w)$ arrives. This occurs with probability $1-\sum_{w \in U_2}  x_u^*/T$.  For event $\bigwedge_{w \in U_1} \Big(\bI_w=1 \Big)$ to occur a sufficient condition is that in some rounds only the vertices in $U_1$ are assigned. For each $w \in U_1$, let $N_w$ be the number of assignments received by $w$ during the $T$ online rounds, \ie  the number of times $w$ gets assigned an edge by $\lpa$ (irrespective of whether it is eventually matched or not). Observe that $N_w \sim \mathrm{Pois}(x_w^*)$ when $T$ is large. We can verify that \footnote{https://eventuallyalmosteverywhere.wordpress.com/2013/02/26/poisson-tails/}  $\Pr[N_w  > \ln T] =O(T^{-2})$.

Thus with probability $1-O(|U|*T^{-2})=1-o(T^{-1})$, $N_w$ is no larger than $\ln T$ for all $w \in U_1$. Hence with a probability of  at least $1-o(T^{-1})$, the total number of rounds reserved for the shots of vertices in $U_1$ is at most $|U|*\ln T$. Thus, with probability a probability at least $1-o(T^{-1})$, we have the following.

\begin{align*}
	\left( 1-\frac{x_u^*/T}{1-\sum_{w \in U_1 \cup U_2 }x^*_w/T }  \right)^{T} & \le \Pr\Big[\bI_u=0 {~\Big |~} \bigwedge_{w \in U_1} \Big(\bI_w=1 \Big)  \bigwedge_{w \in U_2} \Big(\bI_w=0 \Big)  \Big] \\
	& \le \left( 1-\frac{x_u^*/T}{1-\sum_{w \in U_1 \cup U_2} x^*_w/T}  \right)^{T- |U|* \ln T}
\end{align*}

Note that both sides of the inequality approaches $e^{-x_u^*}$ when $|U|=o(\sqrt{T})$ and $T \rightarrow \infty$. Thus we are done. 
\end{proof}

 \noindent\textbf{Proof of Lemma~\ref{lem:u-update}}
\begin{proof}
The first part of the Lemma follows directly from Lemma \ref{lem:u-side} with $\bI_u=0$. We will now prove the second part. Let $\bI_{u,t}$ indicate that $u$ is matched (for the first time) at time $t$. Consider a given $e \in E(u)$. 
\begin{align*}
\Pr[ \X_u=\sE_e | \bI_u=1] &= \sum_{t=1}^T \Pr[ \X_u=\sE_e | \bI_{u,t}=1]\cdot \Pr[  \bI_{u,t}=1| \bI_u=1]\\
&= \sum_{t=1}^T \frac{x_e^*/T}{\sum_{e'\in E(u)}x_{e'}^*/T} \cdot \Pr[  \bI_{u,t}=1| \bI_u=1]\\
&= \frac{x_e^*}{x_u^*}
\end{align*}
The second equality follows from the following argument. In each time $t$, $u$ will be assigned an edge $e' \in E(u)$ with probability $x_{e'}^*/T$. Thus conditioning on the event that $u$ gets matched at $t$, the conditional probability that $u$ gets matched by edge $e$ is exactly $\frac{x_e^*/T}{\sum_{e'\in E(u)}x_{e'}^*/T}$. Therefore we have 
	\[
	\Pr[\X_u=\sE_e]=\Pr[\X_u=\sE_e | \bI_u=1] \cdot \Pr[\bI_u=1]=\Big(1-e^{-x_u^*}\Big)\frac{x_e^*}{x_u^*}.
	\]
\end{proof}

%that the final indicator random vector $\X$ can be viewed in the following constructive way parameterized by $\x^*$: 

%Now we are ready to prove the key theorem.
%Let $\alp=(1-1/e)$. We will try to prove that $\E[f(\X)] \ge F(\alp )$
%%the set of all possible realizations if we sample each entry independently according to $\Y_{-u}$.  
\begin{theorem} \label{thm:key}
$\E[f(\X)] \ge (1-1/e) F(\x^*)$
\end{theorem}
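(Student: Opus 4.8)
The plan is to flesh out the invariant argument indicated in the proof sketch: realize $\X$ through procedure (*) — valid because we work in the regime $T\to\infty$, $|U|=o(\sqrt T)$, where Lemmas~\ref{lem:u-side} and~\ref{lem:u-update} imply that the blocks $\{\X_u\}_{u\in U}$ are mutually independent with $\Pr[\X_u=\bz]=e^{-x_u^*}$ and $\Pr[\X_u=\sE_e]=(1-e^{-x_u^*})x_e^*/x_u^*$ — and then interpolate, block by block, from the product vector $\Y$ (with $\Pr[Y_e=1]=(1-1/e)x_e^*$) towards $\X$. Fix an ordering $u_1,\dots,u_n$ of $U$ and let $\Z^{(k)}$ be the random vector whose blocks $E(u_1),\dots,E(u_k)$ are drawn independently by procedure (*), whose blocks $E(u_{k+1}),\dots,E(u_n)$ are independent Bernoullis with $\Pr[Z_e=1]=(1-1/e)x_e^*$, and for which the two groups are independent. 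Then $\Z^{(0)}$ has the law of $\Y$ and $\Z^{(n)}$ has the law of $\X$, so the theorem follows once we establish $\E[f(\Z^{(k-1)})]\le\E[f(\Z^{(k)})]$ for every $k$ and combine this with $\E[f(\Y)]=F((1-1/e)\x^*)\ge(1-1/e)F(\x^*)$ (the last inequality being Lemma~4.2 of~\cite{bansal2012solving}, or simply concavity of $c\mapsto F(c\x^*)$ together with $F(\bz)=0$).

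The two hybrids $\Z^{(k-1)}$ and $\Z^{(k)}$ differ only inside the block $E(u_k)$, and in both of them that block is independent of the remaining coordinates, whose joint distribution is the same in the two vectors. Hence, conditioning on those remaining coordinates taking a fixed value $\ba$ and writing $g_\ba$ for the (monotone submodular) restriction of $f$ to $E(u_k)$ obtained by fixing everything else to $\ba$, with multilinear extension $G_\ba$, everything reduces to the per-block inequality (set $u=u_k$ and $s:=x_u^*$, with $s\le1$ by Constraint~\eqref{cons:u})
\begin{equation*}
e^{-s}\,g_\ba(\bz)\ +\ \frac{1-e^{-s}}{s}\sum_{e\in E(u)}x_e^*\,g_\ba(\sE_e)\ \ \ge\ \ G_\ba\big(\,((1-1/e)x_e^*)_{e\in E(u)}\,\big),
\end{equation*}
whose left side is $\E[g_\ba(\X_u)]$ and whose right side is the expectation of $g_\ba$ under the product law on the block. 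I expect this per-block inequality to be the crux of the whole proof, and also the only place where the online loss factor $1/e$ is created.

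To prove it I would run a short pipage-rounding/convexity argument. On the right, since $\sum_e(1-1/e)x_e^*\le(1-1/e)s\le1$, the diminishing-returns property of $g_\ba$ gives $g_\ba(S)\le g_\ba(\bz)+\sum_{e\in S}(g_\ba(\sE_e)-g_\ba(\bz))$ for every $S\subseteq E(u)$; taking expectations over a random subset with the stated marginals yields $G_\ba\big(((1-1/e)x_e^*)_e\big)\le(1-(1-1/e)s)\,g_\ba(\bz)+(1-1/e)\sum_e x_e^*\,g_\ba(\sE_e)$. Comparing with the left side of the per-block inequality, it then suffices to verify the elementary scalar fact $(1-e^{-s})/s\ge1-1/e$ on $(0,1]$ (which is exactly where $x_u^*\le1$ is used): combined with monotonicity $g_\ba(\sE_e)\ge g_\ba(\bz)$ and the identity $\big((1-e^{-s})/s-(1-1/e)\big)\sum_e x_e^*=(1-e^{-s})-(1-1/e)s=(1-(1-1/e)s)-e^{-s}$, it shows the left side dominates the upper bound, hence dominates $G_\ba$. (The case $s=0$ is trivial.) Telescoping over $k$ gives $\E[f(\Y)]\le\E[f(\X)]$, which finishes the proof.

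The main obstacle is the per-block inequality itself: the ``subadditive from $\bz$'' estimate is the step where the product law $\Y$ loses relative to the ``at most one edge per block'' law of $\X$, and getting the bookkeeping right needs both that estimate and the monotonicity of $(1-e^{-s})/s$ on $(0,1]$ (forcing $x_u^*\le1$). Everything else — the hybrid/telescoping argument, the conditioning reduction, and the passage from $F((1-1/e)\x^*)$ to $(1-1/e)F(\x^*)$ — is routine. One caveat to flag: the argument inherits the $T\to\infty$, $|U|=o(\sqrt T)$ hypotheses from Lemma~\ref{lem:u-side}, since these are what make the blocks of $\X$ exactly independent with the stated per-block marginals; a finite-$T$ version would require carrying the $o(1)$ errors of that lemma through the block-by-block telescoping, which is why the theorem is stated only in the limit.
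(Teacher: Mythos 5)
Your proposal is correct and follows essentially the same route as the paper: a block-by-block hybrid interpolation from the product vector $\Y$ with marginals $(1-1/e)x_e^*$ to $\X$, reduced by conditioning to a per-block inequality for the monotone submodular restriction $g_{\ba}$ (the paper's Lemma~\ref{lem:lpa_sub_g}), proved via the subadditivity-from-$\bz$ estimate (the paper's Lemma~\ref{lem:lpa_g_neg}) together with the scalar fact $(1-e^{-s})/s\ge 1-1/e$ on $(0,1]$ and monotonicity to redistribute the surplus mass. The only differences are presentational (you apply the mass-squeezing after conditioning on $\ba$ rather than before, and prove the subadditivity bound directly rather than by induction), and you correctly flag the same reliance on Lemmas~\ref{lem:u-side} and~\ref{lem:u-update} for the asymptotic independence and per-block law of $\X$.
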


\begin{proof}
Let $\Y, \mathbf{A}$ be as defined in the proof sketch above. Let $\Z$ be the random vector obtained after updating the first block of $\mathbf{A}$, say $E(u)$. We show that $\E[f(\Z)] \ge F(\mathbf{A})$. Let $\mathbf{R} = (1-1/e)\mathbf{x}^*$.

Let $\mathbf{R}_{-u} \in [0,1]^m$ be the restriction of $\mathbf{R}$ onto all blocks other than $E(u)$ (hence the entries in the block $E(u)$ are all zeroes). Note that both $\mathbf{R}$ and $\mathbf{R}_{-u}$ have the same length $m$. Let $\cR_{-u}$ be the random indicator vector to denote whether an edge was sampled (independently with probability according to $\mathbf{R}_{-u}$).

% and let $\cS(\Y_{-u}) \subseteq 2^{\{0,1\}^m}$ be set of all possible realizations of $\cY_{-u}$.
%\in \cS(\Y_{-u})  \in \cS(\Y_{-u}) 

From the above analysis, we have the following.
\begin{align}
&\E[f(\Z)] \\
&= \textstyle \sum_{e \in E(u)} \Big(1-e^{-x_u^*}\Big)\frac{x_e^*}{x_u^*} F(\mathbf{R}_{-u}+\sE_e)+e^{-x_u^*}F(\mathbf{R}_{-u}) \label{eqn:lpa1}\\
& \ge \textstyle \sum_{e\in E(u)} (1-1/e) x_e^* F(\mathbf{R}_{-u}+\sE_e)+\Big(1-(1-1/e) x_u^*\Big)F(\mathbf{R}_{-u})  \label{eqn:lpa2}\\
&= \textstyle \sum_{e \in E(u)}  (1-1/e) x_e^* \Big( \sum_{\ba }\Pr[\cR_{-u}=\ba] F(\ba+\sE_e) \Big) \nonumber\\ &+\Big(1-(1-1/e) x_u^*\Big)  \Big( \sum_{\ba  }\Pr[\cR_{-u}=\ba] F(\ba) \Big)  \label{eqn:lpa3}\\
&= \textstyle \sum_{\ba}\Pr[\cR_{-u}=\ba] \Big( 
\sum_{e \in E(u)}  (1-1/e) x_e^*  F(\ba+\sE_e)+\big( 1-(1-1/e) x_u^* \big) F(\ba) \Big)  \label{eqn:lpa4}
\end{align}

Inequality \eqref{eqn:lpa2} is due to the monotonicity of $f$:  we have that $F$ is non-decreasing over each entry and thus $F(\mathbf{R}_{-u}+\sE_{e}) \ge F(\mathbf{R}_{-u})$ for all $e$; note that the probability mass distributed over each $F(\mathbf{R}_{-u}+\sE_e)$ is $\Big(1-e^{-x_u^*}\Big)\frac{x_e^*}{x_u^*}  \ge (1-1/e) x_e^*$ and thus by squeezing the surplus mass from each lower bound to the last item $F(\mathbf{R}_{-u})$, we get our claim. Inequality \eqref{eqn:lpa3} follows directly from the definition of the multilinear extension $F$.

% binary vector $\ba \in  \cS(\Y_{-u})$
For a given realization $\ba$ of $\cR_{-u}$, define a set function $g_{\ba}$ over $E(u)$ as follows. Consider a given binary vector $\bb' \in \{0,1\}^{|E(u)|}$. For notational convenience, we extend $\bb'$ to a vector of length of $m$ by filling the remaining entries as $0$; let $\bb$ be the resultant vector. Define $g_{\ba}(\bb')=f(\ba+\bb)$ for each $\bb' \in \{0,1\}^{|E(u)|}$. From  Lemma~\ref{lem:lpa_sub_g}, we have that each given $\ba$, $g_{\ba}$ is a monotone submodular set function over $E(u)$. Let $\mathbf{R}_u \in [0,1]^m$ be the restriction of $\mathbf{R}$ onto the block $E(u)$ with the remaining entires being $0$ and $\mathbf{R}'_u \in [0,1]^{|E(u)|}$ be the truncated version of $\mathbf{R}_u$ onto $E(u)$. Let $G_{\ba}$ be the multilinear extension of $g_{\ba}$. Focus on the second part of the expression in \eqref{eqn:lpa4}.

\begin{small}
\begin{align}
\Lambda & \doteq \sum_{e \in E(u)}  (1-1/e) x_e^*  F(\ba+\sE_e)+\big( 1-(1-1/e) x_u^* \big) F(\ba)  \label{eqn:lpa_a1}\\
 &=\sum_{e \in E(u)}  (1-1/e) x_e^*  g_{\ba}(\sE'_e)+ \big( 1-(1-1/e) x_u^* \big) g_{\ba}(\bz)  \label{eqn:lpa_a2}\\
  &\ge G_{\ba}(\Y'_u)  \label{eqn:lpa_a3}  \\
  &=\sum_{\bb' }\Pr[\cR'_{u}=\bb'] g_{\ba}(\bb')
  \label{eqn:lpa_a4}\\
  &=\sum_{\bb }\Pr[\cR_{u}=\bb] f(\ba+\bb)  \label{eqn:lpa_a5}
\end{align}
\end{small}

In \eqref{eqn:lpa_a2}, both $\sE'_e$ (elementary vector of $e$) and $\bz$ have the same length $|E(u)|$. Recall that $\mathbf{R}=(1-1/e)\x^*$ and thus $\mathbf{R}'_u=\Big((1-/e)x_e^*: u\in E(u)\Big)$. Inequality  \eqref{eqn:lpa_a3} follows from applying Lemma \ref{lem:lpa_g_neg},as shown later, to the function $g_{\ba}$ and $\mathbf{R}'_u$. In equalities \eqref{eqn:lpa_a4} and \eqref{eqn:lpa_a5},  $\cR'_u \in \{0,1\}^{|E(u)|}$  and $\cR_u \in \{0,1\}^m$ are the  respective random indicator vectors denoting if an edge was sampled (independently according to $\mathbf{R}'_u$ and $\mathbf{R}_u$). The two equalities  \eqref{eqn:lpa_a4} and \eqref{eqn:lpa_a5} follow directly from the definitions of $g_{\ba}$ and $G_{\ba}$.   

Substituting the result in \eqref{eqn:lpa_a5} back to \eqref{eqn:lpa4}, we have that 
\begin{align*}
\E[f(\Z)]  \ge  \sum_{\ba}\Pr[\cR_{-u}=\ba] \sum_{\bb }\Pr[\cR_{u}=\bb] f(\ba+\bb) =\E[f(\Y)] = F(\Y).
\end{align*}

By applying the above analysis repeatedly to each block $E(u)$, we have that the final random integral vector $\Z$ is same as $\X$ and that it satisfies $\E[f(\X)] \ge F(\Y) \ge (1-1/e) F(\x^*)$.   
\end{proof}

%\in \cS(\Y'_{u}) 
%\in \cS(\Y_{u}) 
%and $\cS(\Y'_u)$ ($\cS(\Y_u)$) is the resuntant 

% refers to that elementary vector of $e$ with length $|E(u)|$ and Inequality \eqref{eqn:lpa_a3} is due to Lemma XX by applying monotone submodular function $g_{\ba}$ over the vector $\Y'_u=\Big((1-/e)x_e^*: u\in E(u)\Big)$. Let $\cY_u$ be the random indicator vector if we sample each edge in $E(u)$ independently according to $\Y_u$ and $\cS(\Y_u)$ be the set of all possible realizations of $\cY_u$. 

%By Lemma XX, we have that 
%\begin{align}
% \sum_{e \in E(u)}  (1-1/e) x_e^*  F(\ba+\sE_e)+\big( 1-(1-1/e) x_u^* \big) F(\ba) 
% &=\sum_{e \in E(u)}  (1-1/e) x_e^*  g_{\ba}(\sE_e)+ \big( 1-(1-1/e) x_u^* \big) g_{\ba}(\bz) \\
% & \ge G_{\ba}(\x)
%\end{align}

 \subsection{Technical Lemmas}
 
Throughout this section, we assume $f$ is a general non-negative monotone submodular function over the ground set $[m]=\{1,2,\cdots, m\}$. Consider a given subset $S \subseteq [m] $ and define $g_{S}$ as a set function over $[m]-S$ as follows: for each $S' \subseteq [m]-S$, $g_{S}(S')=f(S \cup S')$. 
%\vspace{0.1in}
\begin{lemma} \label{lem:lpa_sub_g}
For each given $S\subseteq [m]$, $g_S$ is a monotone submodular function over $[m]-S$.
\end{lemma}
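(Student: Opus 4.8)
The plan is to verify directly that $g_S$ inherits both monotonicity and submodularity from $f$, since $g_S$ is nothing but $f$ with a fixed set $S$ prepended to every argument. First I would fix $S \subseteq [m]$ and work with the ground set $[m]-S$. For monotonicity, take $A \subseteq B \subseteq [m]-S$; then $S \cup A \subseteq S \cup B$ (the union with $S$ preserves inclusion since $A,B$ are disjoint from $S$, though even that disjointness is not needed), so $g_S(A) = f(S \cup A) \le f(S \cup B) = g_S(B)$ by monotonicity of $f$. Also $g_S(\emptyset) = f(S)$; if one wants the normalization $g_S(\emptyset)=0$ one would instead define $g_S(S') = f(S\cup S') - f(S)$, but the statement here only asks for ``monotone submodular,'' and the shift by the constant $f(S)$ affects neither property, so I would simply note this.

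For submodularity I would use the pairwise-union form of the definition given in the paper: for all $A, B \subseteq [m]-S$,
\begin{align*}
g_S(A \cup B) + g_S(A \cap B) &= f\big(S \cup (A\cup B)\big) + f\big(S \cup (A\cap B)\big) \\
&= f\big((S\cup A) \cup (S\cup B)\big) + f\big((S\cup A) \cap (S\cup B)\big) \\
&\le f(S\cup A) + f(S \cup B) = g_S(A) + g_S(B),
\end{align*}
where the middle equality is the set-algebra identity $(S\cup A)\cup(S\cup B) = S\cup(A\cup B)$ together with $(S\cup A)\cap(S\cup B) = S\cup(A\cap B)$, and the inequality is exactly submodularity of $f$ applied to the sets $S\cup A$ and $S\cup B$. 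This closes the argument.

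There is essentially no obstacle here — the lemma is a routine ``restriction/contraction'' fact about submodular functions, and the only thing to be careful about is the bookkeeping between the truncated vectors of length $|E(u)|$ and their length-$m$ extensions in the way the lemma is invoked in the proof of Theorem~\ref{thm:key}; that is purely notational and does not enter the proof of the lemma itself. I would keep the write-up to the few lines above.
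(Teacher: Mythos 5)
Your proof is correct and follows essentially the same route as the paper: apply the union--intersection form of submodularity of $f$ to the sets $S\cup A$ and $S\cup B$, noting $(S\cup A)\cup(S\cup B)=S\cup(A\cup B)$ and $(S\cup A)\cap(S\cup B)=S\cup(A\cap B)$, with monotonicity inherited directly from $f$. Your side remark about the normalization $g_S(\emptyset)=f(S)\neq 0$ is a fair observation that the paper does not address, but as you note it affects neither property used downstream.
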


\begin{proof}
Consider any two subsets $S'_1, S'_2 \subseteq [m]-S$. Observe that 
\begin{align*}
g_S(S'_1)+g_S(S'_2) &= f(S \cup S'_1)+ f(S \cup S'_2)\\
& \ge f\Big(S \bigcup \Big( S'_1  \cup S'_2 \Big) \Big)+f\Big(S \bigcup \Big( S'_1  \cap S'_2 \Big) \Big) \\
&=g_S( S'_1  \cup S'_2)+g_S( S'_1  \cap S'_2)
\end{align*}

Thus by definition, we have that $g_S$ is submodular. The monotonicity of $g_S$ follows directly from that of $f$. 
\end{proof}

Let $F$ be the the multilinear extension of $f$. Consider a given vector $\x \in [0,1]^m$ such that $\sum_{i \in [m]} x_i \le 1$. Let $\sE_i \in \{0,1\}^m$ be the standard unit  vector such that only the entry at position $i$ is $1$ and all rest are $0$.
\vspace{0.1in}
\begin{lemma}\label{lem:lpa_g_neg}
$F(\x) \le \sum_{i=1}^m x_i f(\sE_i)+(1-\sum_{i =1}^m x_i) f(\bz) $
\end{lemma}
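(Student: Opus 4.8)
The plan is to derive the inequality from the \emph{subadditivity} of $f$, which is exactly the place where monotonicity and $f(\bz)=0$ (not just submodularity) enter. First I would invoke the probabilistic interpretation of the multilinear extension recalled in Section~\ref{sec:prelim}: $F(\x)=\E[f(\cR_{\x})]$, where $\cR_{\x}\subseteq[m]$ is the random set that contains each coordinate $i$ independently with probability $x_i$. The main auxiliary claim is that for every $S\subseteq[m]$ one has $f(S)\le\sum_{i\in S}f(\sE_i)$.

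I would prove this claim by induction on $|S|$. The base case $|S|=0$ is just $f(\bz)=0$. For the inductive step, write $S=S'\cup\{j\}$ with $j\notin S'$; the diminishing-returns form of submodularity (together with monotonicity) gives $f(S)-f(S')\le f(\sE_j)-f(\bz)=f(\sE_j)$, and the inductive hypothesis bounds $f(S')\le\sum_{i\in S'}f(\sE_i)$, so $f(S)\le\sum_{i\in S}f(\sE_i)$. Granting the claim, the rest is linearity of expectation: $F(\x)=\E[f(\cR_{\x})]\le\E\big[\sum_{i\in\cR_{\x}}f(\sE_i)\big]=\sum_{i=1}^m\Pr[i\in\cR_{\x}]\,f(\sE_i)=\sum_{i=1}^m x_i\,f(\sE_i)$. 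Finally, since $\sum_{i=1}^m x_i\le1$ the coefficient $1-\sum_{i=1}^m x_i$ is non-negative, and since $f$ is non-negative the trailing term $(1-\sum_{i=1}^m x_i)\,f(\bz)$ is non-negative (indeed it is exactly $0$ as $f(\bz)=0$); adding it to the right-hand side only weakens the bound, yielding the stated inequality.

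I do not expect a genuine obstacle here. The only mild subtlety is remembering that a monotone submodular function with $f(\bz)=0$ is subadditive, and carrying out the short induction that makes this precise; the hypothesis $\sum_{i} x_i\le1$ is used only so that the term $(1-\sum_i x_i)\,f(\bz)$ is a legitimate non-negative slack term in the form in which the lemma is phrased, and plays no role beyond that.
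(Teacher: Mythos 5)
Your route is genuinely different from the paper's and, in its core idea, cleaner: the paper proves the lemma by induction on the number of coordinates $m$, peeling off one coordinate of the multilinear extension at a time via $F(\x)=x_kF(\x_{-k}+\sE_k)+(1-x_k)F(\x_{-k})$ together with the bound $F(\x_{-k}+\sE_k)\le F(\x_{-k})+f(\sE_k)-f(\bz)$, whereas you work per realization, establish subadditivity of $f$ once, and finish by linearity of expectation. Both arguments ultimately rest on the same diminishing-returns inequality, but yours avoids the bookkeeping with restrictions of $\x$ and, as you observe, never really needs the hypothesis $\sum_i x_i\le 1$.

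There is, however, one real issue: you use $f(\bz)=0$ twice (as the base case of your induction and to discard the trailing term), and that assumption is not available where the lemma is actually invoked. In the proof of Theorem~\ref{thm:key} the lemma is applied to the shifted functions $g_{\ba}(\bb')=f(\ba+\bb)$, for which $g_{\ba}(\bz)=f(\ba)$ is typically strictly positive; this is precisely why the statement carries the term $(1-\sum_i x_i)f(\bz)$ instead of dropping it. With $f(\bz)>0$ your auxiliary claim $f(S)\le\sum_{i\in S}f(\sE_i)$ already fails at $S=\emptyset$, and the intermediate conclusion $F(\x)\le\sum_i x_if(\sE_i)$ is false (take $f\equiv c>0$ and $\x=\bz$). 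The repair is one line: prove instead the anchored subadditivity $f(S)\le f(\bz)+\sum_{i\in S}\bigl(f(\sE_i)-f(\bz)\bigr)$, which follows from the same telescoping argument since submodularity gives $f(S_{j-1}\cup\{i_j\})-f(S_{j-1})\le f(\sE_{i_j})-f(\bz)$; taking expectations over $\cR_{\x}$ then yields exactly $\sum_i x_i f(\sE_i)+(1-\sum_i x_i)f(\bz)$, with no appeal to $f(\bz)=0$, to monotonicity, or to $\sum_i x_i\le 1$.
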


\begin{proof}
We prove by induction on $m$. For the base case $m=1$, we can verify that the inequality becomes tight from the definition of $F$. Now assume our claim is valid for all $m \le k-1$. Consider the case $m=k$.

Let $\x_{-k} \in [0,1]^k$ be the restriction of $\x$ onto the first $k-1$ coordinates with the last entry being $0$. Similarly let $\x_k \in [0,1]^k$ be the restriction of $\x$ onto the last coordinate. Note that both $\x_{-k}$ and $\x_k$ have the same length $\x$ such that $\x=\x_{-k}+\x_k$. We then have the following, which completes the proof.
\begin{align}
F(\x) &= F(\x_{-k}+\x_k)  \label{eqn:app_lpa_1}\\
&= x_k F(\x_{-k}+\sE_k) +(1-x_k) F(\x_{-k})  \label{eqn:app_lpa_2}\\
&\le x_k \Big(  F(\x_{-k})+f(\sE_k)-f(\bz) \Big)+(1-x_k) F(\x_{-k})  \label{eqn:app_lpa_3}\\
&=F(\x_{-k})+x_k \Big( f(\sE_k)-f(\bz) \Big)  \label{eqn:app_lpa_4}\\
& \le \sum_{i=1}^{k-1} x_i f(\sE_i)+ (1-\sum_{i=1}^{k-1} x_i) f(\bz)+x_k \Big( f(\sE_k)-f(\bz) \Big)  \label{eqn:app_lpa_5}\\
&= \sum_{i=1}^{k} x_i f(\sE_i)+ (1-\sum_{i=1}^{k} x_i) f(\bz) \label{eqn:app_lpa_6}
\end{align}

Equality \eqref{eqn:app_lpa_2} follows from the definition of $F$; Inequality\eqref{eqn:app_lpa_3} can be proved as follows. Let $\cX_{-k}$ be the random indicate vector to denote if an element was sampled (independently according to $\x_{-k}$). For every realization $\cX_{-k}=\ba \in \{0,1\}^k$, we have that $f(\ba+\sE_k) \le f(\ba)+f(\sE_k)-f(\bz)$, since $\ba$ and $\sE_k$ represents two disjoint sets over $[k]$. Thus by taking expectation over $\cX_{-k}$, we get the Inequality~\eqref{eqn:app_lpa_3}. Inequality \eqref{eqn:app_lpa_5} follows from the inductive hypothesis.
\end{proof}

\section{Supplementary section for experiments}

	\subsection{Mathematical Program for the experiments in main section}
		Recall that with every vertex $v$, there was a weighted coverage function $f_v: 2^E \rightarrow \mathbb{R}$ associated. The objective was $\sum_{v \in V} f_v$. Our offline program can be reformulated using the epigraph form of the mathematical program and reduces to the following Linear Program. Let $[g]$ denote the set of genres and we use $z$ to index into each genre.
			
			\begin{alignat}{2}
				\label{LP:MovieLens}
				\text{maximize}    & ~~\textstyle \sum_{v \in V} \sum_{z \in [g]} w_{z, v} \gamma_{z, v} & \\
				\text{subject to}  & \textstyle \sum_{e \in \nr{v} } x_{e} \le  \eta*r_v    & \textstyle  \qquad \forall v \in V\\
                	&\textstyle  \sum_{e \in \nr{u} } x_{e} \le  B    & \textstyle \qquad \forall u \in U\\
                   & \textstyle 0 \le x_{e} \le 1   &\textstyle \forall e \in E\\
					&\textstyle \gamma_{z, v} \leq \smashoperator{\sum_{e \in \delta(v): q_e[z]=1}} x_e & \textstyle \forall z \in [g], v \in V\\ 
                   & \textstyle \gamma_{z, v} \leq 1 & \textstyle \forall z \in [g], v \in V
				\end{alignat}		

	Since \eqref{LP:MovieLens} is a linear program, we can use standard tools to solve this linear program exactly.
	
	\subsection{Additional experiment for MovieLens dataset with integral arrival rates}
	
		Here we describe an additional experiment on the MovieLens dataset. We consider the case of $\eta=1$ and compare the performance of various algorithms. The experimental setting is the same as that described in the main section. The key difference is that here we let every user come with probability $1/T$ at each time-step. Hence $T=|V|=200$ in these experiments. Figure~\ref{fig:intRates} plots the performance for varying $b$. As seen in the plot, both \lpa and \cra comprehensively beats the baselines under this setting.

		\begin{figure*}[h!]
    		\centering
        \includegraphics[scale=0.27]{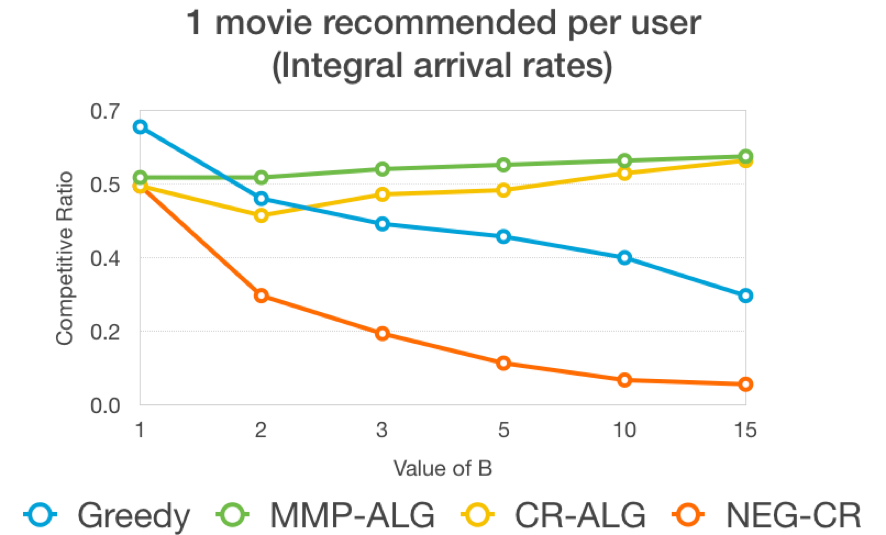}
        \caption{Comparing algorithms with $\eta=1$ on MovieLens dataset with integral arrival rates (\ie $r_v=1$)}
        	\label{fig:intRates}
        \vspace{1cm}	
		\end{figure*}

	\subsection{Experiments on Synthetic Data}
	We now describe our experimental results on synthetic data. The landscape of this problem is vast and we do not aim to be exhaustive. Instead, we focus on two important submodular functions, namely the budget-additive function and the coverage function (both defined in the preliminaries). We consider the same setting as the experiments in the main section. For both these special cases, the offline program can be reduced to a linear program (\LP) which can be solved efficiently.

	\xhdr{Offline programs.}
	We will now describe the offline programs for these two objectives. In our experiments, we use the optimal solution of these respective programs for both benchmark of the offline as well as a guide to our online algorithms.
	
	\begin{enumerate}[noitemsep]
		\item \emph{Budget-additive function.} We want to solve the following mathematical program as our offline benchmark.
				\begin{alignat}{2}
				\label{LP:budgetAdditive_act}
				\text{maximize}    & ~~\textstyle \min\{B, \sum_{e \in E} w_e x_e\}  & \\
				\text{subject to}  &  \textstyle \sum_{e \in \nr{v} } x_{e} \le  r_v    &  \qquad \forall v \in V\\
                	& \textstyle \sum_{e \in \nr{u} } x_{e} \le  1    & \textstyle \qquad \forall u \in U\\
                   & \textstyle 0 \le x_{e} \le 1   & \textstyle \forall e \in E
				\end{alignat}
			This can be converted to a \LP using a standard transformation (similar to the epigraph form of convex programs) as follows.

			\begin{alignat}{2}
				\label{LP:budgetAdditive}
				\text{maximize}    & ~~ \gamma & \\
				\text{subject to}  &\textstyle  \sum_{e \in \nr{v} } x_{e} \le  r_v    & \textstyle \qquad \forall v \in V\\
                	& \textstyle \sum_{e \in \nr{u} } x_{e} \le  1    & \textstyle \qquad \forall u \in U\\
                   & \textstyle 0 \le x_{e} \le 1   & \forall e \in E\\
					&\textstyle  \gamma \leq \sum_{e \in E} w_e x_e &\\ 
                   & \textstyle \gamma \leq B & 
				\end{alignat}
		\item \emph{Coverage function.} Similar to above we can convert it to a standard \LP form and here we directly describe the final form. Every edge is associated with a binary feature-vector $\mathbf{q}_e$ of dimension $g$. Weight of the $z^{th}$ dimension is represented as $w_z$.

				\begin{alignat}{2}
				\label{LP:coverage}
				\text{maximize}    & ~~\textstyle \sum_{z \in [g]} w_z \gamma_z & \\
				\text{subject to}  & \textstyle \sum_{e \in \nr{v} } x_{e} \le  r_v    & \textstyle  \qquad \forall v \in V\\
                	&\textstyle  \sum_{e \in \nr{u} } x_{e} \le  1    & \textstyle \qquad \forall u \in U\\
                   & \textstyle 0 \le x_{e} \le 1   &\textstyle \forall e \in E\\
					&\textstyle \gamma_z \leq \smashoperator{\sum_{e \in E: q_e[z]=1}} x_e & \textstyle \forall z \in [g]\\ 
                   & \textstyle \gamma_z \leq 1 & \textstyle \forall z \in [g]
				\end{alignat}
		
	\end{enumerate}
	
	\begin{figure*}[h!]
    \centering
    \begin{subfigure}[t]{0.45\textwidth}
        \centering
        \includegraphics[scale=0.27]{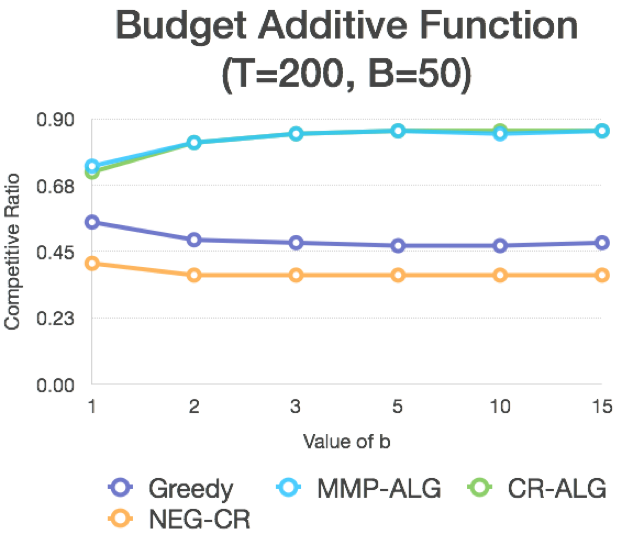}
        \caption{Comparing algorithms on the budget-additive function. $x$ and $y$-axes represent the $b$ value in $b$-matching and competitive ratio respectively.}
        \label{fig:exp1}
    \end{subfigure}~
    \begin{subfigure}[t]{0.45\textwidth}
        \centering
        \includegraphics[scale=0.27]{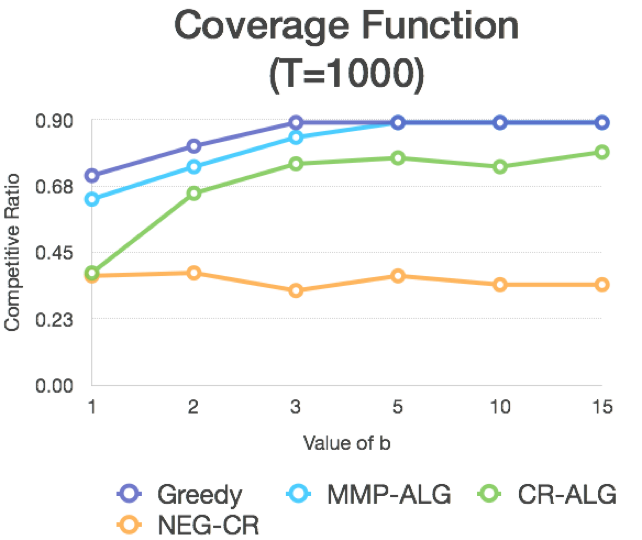}
                \caption{Comparing algorithms on the coverage function.}
        \label{fig:exp2}
    \end{subfigure}
    \vspace{1cm}	
\end{figure*}

	\xhdr{Dataset.} We use a simulated dataset for our experimental purposes as follows. First simulate an instance of the graph and then simulate the arrival sequence. For both objectives, we associate the online type with an arrival rate chosen uniformly at random in the range $[0, 1]$. Then simulate the arrival process based on these rates. The edges are chosen as follows. For every online type, we set a random set of at most $10$ vertices in the offline side as its neighbor and edges between this online type and those offline vertices. We consider the $b$-matching scenario and vary $b$ in the set $\{1, 2, 3, 5, 10, 15\}$. The cardinalities of offline vertices and online vertices vary based on the experiment which we will describe below.
	
	\begin{enumerate}[noitemsep, topsep=0pt]
		\item \emph{Coverage function.} For this experiment, the number of offline vertices to be $40$ and the number of online types to be $200$. The time-horizon is set to $1000$ (hence a lot of time-steps has no arrival of a online vertex). Each offline and online type is associated with a random subset, of size at most 10, of $\{1, 2, \ldots, 1000\}$ as the set of features. For every edge $e=(u, v)$, the feature vector $\mathbb{q}_e$ associated with it is given by the union of the features of its end-points $u$ and $v$. Every feature in $[1000]$ is associated with a weight chosen uniformly at random in the interval $[0, 1]$.
		\item \emph{Budget-additive function.} In this experiment, the number of offline vertices to be $100$ and the number of online types to be $200$. The time-horizon is $200$. For every edge $e$ we choose a random number uniformly in $[0, 1]$ and set the weight $w_e$ as this random number. We set the budget $B=50$.
		
	\end{enumerate}
	
	%%%% (3) Description of the various algorithms %%%%%%

	\COMM{
	\xhdr{Algorithms.} We test both our CR-based (Algorithm~\ref{alg:cr}) and the \MP-based (Algorithm~\ref{alg:nadap}) experimentally. Additionally we consider the following two heuristics for our study\footnote{Both these can also be implemented in the value-oracle model.}. (1) \emph{Greedy:} We consider the following natural greedy algorithm. At time step $t$, let $S_t$ be the set of edges chosen so far. When a vertex $v$ arrives, we choose an available neighbor $u$ that maximizes $f(S_t \cup \{(u, v)\})-f(S_t)$. If no neighbor is available, we drop $v$. (2) \emph{Negative CR-based algorithm (\NEGCR):} This algorithm is a small tweak to the CR-based algorithm. We replace the independent sampling followed by the uniform sampling at every $u$ steps with the following procedure. At every $u$, we use the dependent rounding routine due to \cite{gandhi2006dependent} to obtain a semi-matching $\mathcal{M}_1$\footnote{Each $u \in U$ has degree $1$ but some $v \in V$ can have degree greater than $1$.}. In the online phase when a $v$ arrives, we sample one of its available neighbors in $\mathcal{M}_1$ uniformly and match it. If not, we drop $v$.
	}
	
	%%%% (4) Comparison of the various algorithms and discussion %%%%%%

	\xhdr{Results.} 
		Figures~\ref{fig:exp1} and \ref{fig:exp2} plot the results of the experiments on the budget-additive function and coverage function, respectively. In budget-additive case, our algorithms $\lpa$ and $\cra$ significantly outperform the heuristic baselines, namely, greedy and \NEGCR. Additionally the competitive ratios of our algorithm are much higher than the theoretical benchmark ($0.63$ and $0.20$ respectively). Note that as $b$ increases for our experimental parameters the competitive ratios only slightly increase. In the coverage function scenario, we see that greedy out-performs both our algorithms when $b$ is small. But as soon as $b$ is slightly increased, our algorithm matches the performance of Greedy. This suggests that for very special coverage functions one might be able to construct algorithms that do better. However, even on such cases our general algorithm is \emph{very competitive}. In this scenario, increasing the value of $b$ has a larger effect than before, on the competitive ratio, yet not drastically larger. These experiments are a further evidence to support that analysis on $b=1$ gives a good proxy for understanding the performance on larger $b$.

\end{document}